\documentclass[10pt,twocolumn,twoside] {IEEEtran}

\makeatletter
\def\ps@headings{%
\def\@oddhead{\mbox{}\scriptsize\rightmark \hfil \thepage}%
\def\@evenhead{\scriptsize\thepage \hfil \leftmark\mbox{}}%
\def\@oddfoot{}%
\def\@evenfoot{}}
\makeatother \pagestyle{headings}

\IEEEoverridecommandlockouts

\usepackage{amsfonts}
\usepackage[dvips]{graphicx}
\usepackage{times}
\usepackage{cite}
\usepackage{amsmath}
\usepackage{array}
\usepackage{amssymb}

\usepackage{stfloats}
\usepackage{slashbox}
\usepackage{graphicx}
\usepackage{footnote}
\usepackage{booktabs}
\usepackage{array}
\usepackage{algorithmic}
\usepackage{algorithm}
\usepackage{subeqnarray}
\usepackage{cases}
\usepackage{threeparttable}
\usepackage{color}
\usepackage{hyperref}
\usepackage{bbm}
\usepackage{CJK}
\usepackage{bm}
\usepackage{epstopdf}
\usepackage{multirow}
\usepackage{bm}
\usepackage{caption}
\usepackage{subcaption}
\usepackage{booktabs}
\usepackage{enumerate}
\usepackage{adjustbox}


\captionsetup[subfigure]{labelformat=simple}

\newcommand{\tabincell}[2]{\begin{tabular}{@{}#1@{}}#2\end{tabular}}

\usepackage{mathtools}

\DeclarePairedDelimiter\floor{\lfloor}{\rfloor}

\newtheorem{theorem}{Theorem}
\newtheorem{remark}{Remark}

\newtheorem{lemma}{Lemma}
\newtheorem{definition}{Definition}
\newtheorem{problem}{Problem}

\begin{document}

\title{Stochastic Throughput Optimization for Two-hop Systems with Finite Relay Buffers}

\author{Bo~Zhou, Ying~Cui,~\IEEEmembership{Member,~IEEE}, and Meixia~Tao,~\IEEEmembership{Senior~Member,~IEEE}
\thanks{This paper was presented in part at  IEEE Globecom 2014.

B.~Zhou, Y.~Cui and M.~Tao are with the Department of
Electronic Engineering at Shanghai Jiao Tong University, Shanghai,
200240, P. R. China. Email: \{b.zhou, cuiying, mxtao\}@sjtu.edu.cn.}
}
 \maketitle

\begin{abstract}
Optimal queueing control of multi-hop networks remains a challenging problem even in the simplest scenarios. In this paper, we consider a two-hop half-duplex relaying system with random channel connectivity. The relay is equipped with a finite buffer. We focus on stochastic link selection and transmission rate control to maximize the average system throughput subject to a half-duplex constraint. We formulate this stochastic optimization problem as an infinite horizon average cost Markov decision process (MDP), which is well-known to be a difficult problem. By using sample-path analysis and exploiting the specific problem structure, we first obtain an \emph{equivalent Bellman equation} with reduced state and action spaces. By using \emph{relative value iteration algorithm}, we analyze the properties of the value function of the MDP. Then, we show that the optimal policy has a threshold-based structure by characterizing the \emph{supermodularity} in the optimal control. Based on the threshold-based structure and Markov chain theory, we further simplify the original complex stochastic optimization problem to a static optimization problem over a small discrete feasible set and propose a low-complexity algorithm to solve the simplified static optimization problem by making use of its special structure. Furthermore,  we  obtain the closed-form optimal threshold for the symmetric case. The analytical results obtained in this paper also  provide design insights for two-hop relaying systems with  multiple relays equipped with finite relay buffers.
\end{abstract}
\begin{IEEEkeywords}
Wireless relay system, finite buffer, throughput optimization, Markov
decision process, Markov chain theory, matrix update, structural results.
\end{IEEEkeywords}
\section{Introduction}
The demand for communication services has been changing from traditional voice telephony services to mixed voice, data, and multimedia services.
When data and realtime services are considered, it is necessary to jointly consider  both physical layer issuers such as coding and modulation as well as higher layer issues such as network congestion and delay. It is also important to  model these services using queueing concepts\cite{queueing,surveyit}. On the other hand, to meet the explosive demand for these services, relaying has been shown effective for providing higher wireless date rate and better quality of service. Therefore, relay has been included in LTE-A \cite{ltea} and WiMAX \cite{wimax} standards, where both technologies support fixed and two-hop relays \cite{imt}.

Consider a two-hop relaying system with one source node (S), one half-duplex relay node (R) and one destination node (D) under i.i.d. on-off fading.
Under conventional \emph{decode-and-forward} (DF) relay protocol, a scheduling slot is divided into two transmission phases, i.e., the \emph{listening phase} (S-R) and the \emph{retransmission phase} (R-D). The S-R phase must be followed by the R-D phase\cite{cooperative}. Under \emph{the instantaneous flow balance constraint}, the system throughput is the minimum of the throughput from S to R and from R to D. Therefore, under random link connectivity, to achieve a non-zero system throughput (from S to D) within a scheduling slot, both the S-R and R-D links should be connected\cite{TIT15Cui}.

Now consider a finite buffer at R and apply cross-layer \emph{buffered decode-and-forward} (BDF) protocol to exploit the random channel connectivity and queueing\cite{TIT15Cui}. Under BDF, due to buffering at R, a scheduling slot can be adaptively allocated for the S-R transmission or the R-D transmission, according to the R queue length and link quality. Then, the throughput to D can be made non-zero provided that the R-D link is connected. While the buffer at R appears to offer obvious advantages, it is not clear how to design the optimal control to maximize the average system throughput given a finite relay buffer. Buffering a certain amount of bits at R can capture  R-D transmission opportunity (when only the R-D link is on) and  improve the  throughput in the future. However, buffering too many bits at R may waste  S-R transmission opportunity (when only the S-R link is on)  due to R buffer overflow. Therefore, it remains unclear how to take full advantage of the finite buffer at R to balance the transmission rates of the S-R and R-D links so as to maximize the average system throughput.

Recently, the idea of cross-layer design using queueing concepts has been considered in the context of multi-hop networks with buffers. In \cite{TIT15Cui} and \cite{ISIT11Cui}, the authors consider the delay-optimal control for two-hop networks with infinite buffers at the source and relay. Specifically, in \cite{ISIT11Cui}, the authors obtain a delay-optimal link selection policy for non-fading channels. Then, in\cite{TIT15Cui}, the authors extend the analysis to i.i.d. on/off fading channels and show that a threshold-based link selection policy is asymptotically delay-optimal when the scheduling slot duration tends to zero. However, it is not known whether the delay-optimal policy is still of a threshold-based structure.
In \cite{ztit}, the authors consider a two-hop relaying system with an infinite backlog at the source and an infinite buffer at the relay. The optimal link selection policies are obtained to maximize the average system throughput.
In the aforementioned references, the relay is assumed to be equipped with an infinite buffer and the proposed algorithms cannot guarantee that the instantaneous relay queue length is below a certain threshold. However, in practical systems, buffers are finite. The optimal designs for systems with infinite buffers do not necessarily lead to good performance for systems with finite buffers. In addition, in several practical networks, such as wireless sensor networks, wireless body area networks and wireless networks-on-chip, buffer size is limited. This is because that using buffers of large size would introduce practical issues, such as larger on-chip board space, increased memory-access latency and higher power consumption \cite{baron09,finiteline}.
Therefore, it is very important to consider finite relay buffers in designing optimal resource controls for multi-hop networks to support data and realtime services
\cite{ztwc2,finitebuffer,Long,xuewiopt,xuetvt}.

Lyapunov drift approach represents a systematic way to queue stabilization problems for general multi-hop networks with infinite buffers\cite{georgiadis2006,neely10}.
Specifically, Lyapunov drift approach mainly relies on quadratic Lyapunov functions, and can be used to obtain stochastic control algorithms with achieved utilities that are arbitrarily close to optimal.
The derived  control algorithms usually do not require system statistics predict beforehand and can be easily implemented online.
However, the traditional Lyapunov drift approach cannot properly handle systems with finite buffers. References \cite{finitebuffer} and \cite{Long} extend the traditional Lyapunov drift approach in \cite{georgiadis2006} and \cite{neely10} to design stochastic control algorithms for  multi-hop networks with infinite source buffers and finite relay buffers. In particular, \cite{finitebuffer} and \cite{Long} employ a new type of Lyapunov functions by multiplying queue backlogs of infinite buffers to the quadratic term of queue backlogs of finite buffers. Specifically, in \cite{finitebuffer}, the authors propose scheduling algorithms to stabilize source queues under a fixed routing design. In \cite{Long}, the authors propose joint flow control, routing and scheduling algorithms to maximize the throughput. References \cite{xuewiopt} and\cite{xuetvt} adopt similar approaches to those in \cite{finitebuffer} and \cite{Long}, and design control algorithms to optimize network utilities for multi-hop networks with finite source and relay buffers. However, the gap between the utility of each algorithm proposed in \cite{Long,xuetvt,xuewiopt}  and the optimal utility is inversely proportional to the buffer size. In other words, for the finite buffer case, the performance gap is always positive. Therefore, in contrast to the algorithms for the infinite buffer case in \cite{georgiadis2006} and \cite{neely10},  the algorithms for the finite buffer case in \cite{Long,xuetvt,xuewiopt} cannot achieve utilities that are arbitrarily close to optimal.

On the other hand, dynamic programming represents a systematic approach to optimal queueing control problems\cite{surveyit,yehphd,bertsekas}. Generally, there exist only numerical solutions, which do not typically offer many design insights and are usually impractical for implementation due to the curse of dimensionality \cite{bertsekas}.
For example, in \cite{wangtsp,wangtit}, the authors consider delay-aware control problems for two-hop relaying systems with multiple relay nodes  and propose suboptimal distributed numerical algorithms using approximate Markov Decision Process (MDP) and stochastic learning \cite{bertsekas}.
However, the obtained numerical algorithms may still be too complex for practical systems and do not offer many design insights.
Several existing works focus on characterizing structural properties of optimal policies to obtain design insights for simple queueing networks. However, most existing analytical results are for a single queue with either controlled arrival rate or departure rate  \cite{agarwal,ngo,ehsan,cdc}.
To the best of our knowledge, structural results for a single queue with both controlled arrival and departure rates are still unknown.
Furthermore, if the single queue has a finite buffer, the analytical results are limited. For example, \cite{cdc} characterizes structural properties of a single finite queue only for part of the queue state space.
 The challenge of structural analysis for finite-buffer systems stems from the \emph{reflection} effect (when a finite buffer is almost full)\cite{finiteline}.

In general, the stochastic throughput maximization for multi-hop systems with fading channels and finite relay buffers is still unknown even for the case of a simple two-hop relaying system.
In this paper, we shall tackle some of the technical challenges.
We consider a two-hop relaying system with one source node, one half-duplex relay node and one destination node  as well as random link connectivity. S has an infinite backlog and R is equipped with a finite buffer.
We consider stochastic link selection and transmission rate control to maximize the average system throughput subject to a half-duplex constraint. We formulate the stochastic average throughput optimization problem as an infinite horizon average cost MDP, which is well-known to be a difficult problem in general. By using sample-path analysis and exploiting the specific problem structure, we first obtain an \emph{equivalent Bellman equation} with reduced state and action spaces. By \emph{relative value iteration algorithm}, we analyze properties of the value function of the MDP. Then, based on these properties and the concept of \emph{supermodularity}, we show that the optimal policy has a threshold-based structure. By the structural properties of the optimal policy and Markov chain theory, we further simplify the original complex stochastic optimization problem to a  static optimization problem over a small discrete feasible set. We propose a low-complexity algorithm to solve the static optimization problem by making use of  its special structure.
Furthermore, we obtain the closed-form  optimal threshold for the symmetric case. Numerical results verify the theoretical analysis and demonstrate the performance gain of the derived optimal policy over the existing solutions.

\emph{Notations:} Boldface uppercase letters denote matrices and boldface lowercase letters denote vectors. $\mathbf{I}_n$ denotes an $n\times n$  identity matrix, the $k$-th column of which is denote as $\mathbf{e}_{k,n}$. $\mathbf{A}^{-1}$ and $\mathbf{A}^{T}$ denote the inverse and the transpose of matrix $\mathbf{A}$, respectively. $||\mathbf{q}||$ denotes the norm of vector $\mathbf{q}$. The important notations used in this paper are summarized in Table~\ref{tablenotation}.

\begin{table}[!htbp]
\footnotesize
\begin{adjustbox}{max width=0.49\textwidth}
\begin{tabular}{|c|c|}
\hline
$t$&slot index \\
\hline
$R_s, R_r$ & maximum transmission rates of S and R\\
\hline
$N_r$ & relay buffer size\\
\hline
$p_s, p_r$ & probabilities of ``ON'' for S-D and R-D links\\
\hline
$\mathbf{G}=(G_{s}, G_{r})$&joint CSI \\
\hline
$Q$&QSI \\
\hline
$\mathbf{X}=(Q,\mathbf{G})$&system state \\
\hline
$\bm{\mathcal{G}}=\mathcal{G}\times\mathcal{G}$&joint CSI state space\\
\hline
$\mathcal{Q}$&QSI state space \\
\hline
$\bm{\mathcal{X}}=\mathcal{Q}\times\bm{\mathcal{G}}$&system state space\\
\hline
$a_s, a_r$ & link selection actions for S-D and R-D links\\
\hline
$u_s, u_r$ & transmission rates of S and R\\
\hline
$\Omega=(\alpha, \mu)$ & link selection and transmission rate control policy\\
\hline
$V(Q)$ & value function\\
\hline
$J(Q,a_r)$ & state-action reward function\\
\hline
$\mathcal{C}$ & recurrent class of relay queue state process\\
\hline
$Q_{th}\in\mathcal{Q},q_{th}\in\mathcal{C}$ & threshold \\
\hline
\end{tabular}
\end{adjustbox}
\centering
\caption{\small{List of important notations}}\label{tablenotation}
\end{table}
\section{System Model}
As illustrated in Fig.~\ref{fig:systemmodel}, we consider a two-hop relaying system with one source node (S), one relay node (R) and one destination node (D). S cannot transmit packets to D due to the limited coverage and has to communicate with D with the help of R via the S-R link and the R-D link.\footnote{This two-hop relaying model can be used to model the Type 1 relay in LTE-Advanced and the non-transparent relay in WiMAX \cite{imt}.} R is half-duplex and equipped with a finite buffer.
We consider a discrete-time system, in which the time axis is partitioned into scheduling slots with unit slot duration. The slots are indexed by $t$ $(t=1,2,...)$.
\begin{figure}[h]
\begin{centering}
\includegraphics[scale=.3]{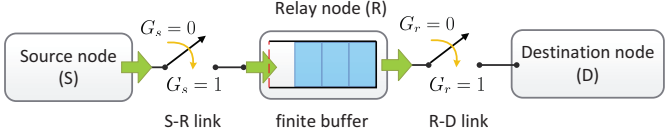}
 \caption{\small{System model.}}\label{fig:systemmodel}
\end{centering}
\end{figure}
\subsection{Physical Layer Model}
We model the channel fading of the S-R link and the R-D link  with i.i.d. random link connectivity.\footnote{This channel fading model is widely used in the literature \cite{onoff07,onoff12}.} Let $G_{s,t}, G_{r,t}\in\mathcal{G}\triangleq\{0,1\}$ denote the link connectivity state information (CSI) of the S-R link and the R-D link at slot $t$, respectively, where 1 denotes connected and 0 not connected. Let $\mathbf{G}_t\triangleq(G_{s,t}, G_{r,t})\in\bm{\mathcal{G}}\triangleq\mathcal{G}\times\mathcal{G}$ denote the joint CSI at the $t$-th slot,
where $\bm{\mathcal{G}}$ denotes the joint CSI state space.

\emph{Assumption 1 (Random Link Connectivity Model):}   $\{G_{s,t}\}$ and $\{G_{r,t}\}$ are both i.i.d. over time, where in each slot $t$, the probabilities of being 1 for $G_{s,t}$ and $G_{r,t}$ are $p_s$ and $p_r$, respectively, i.e., $\Pr[G_{s,t}=1]=p_s$ and $\Pr[G_{r,t}=1]=p_r$. Furthermore, $\{G_{s,t}\}$ and $\{G_{r,t}\}$ are independent of each other.

We assume fixed transmission powers of S and R, and consider packet transmission. The maximum transmission rates (i.e., the maximum numbers of packets transmitted within a slot) of the S-R link when $G_{s,t}=1$ and the R-D link when $G_{r,t}=1$ are given by $R_s$ and $R_r$, respectively.
Note that, to avoid the overflow of the finite R buffer, the actual transmission rate of S may be smaller than $R_s$.
In addition, the actual transmission rate of R may be smaller than $R_r$, subject to the availability of packets in the R buffer.
These will be further illustrated in Section III-A.

\subsection{Queueing Model}
We assume that S has an infinite backlog (i.e., always has data to transmit) and consider a finite buffer of size $N_r<\infty$ (in number of packets) at R. Note that $N_r$ can be arbitrarily large. Assume $N_r>\max\{R_s, R_r\}$. The finite buffer at R is used to hold the packet flow from S. We consider the \emph{buffered decode-and-forward} (BDF) protocol \cite{TIT15Cui} to exploit the potential benefit of buffering at R under random channel connectivity. Specifically, according to BDF, (i) S can transmit packets to R when the S-R link is connected, and R decodes and stores the packets from S in its buffer; (ii) R can transmit the packets in its buffer to D when the R-D link is connected. Using the buffer at R and BDF, we can dynamically select the S-R link or the R-D link to transmit and choose the corresponding transmission rate at each slot based on the channel fading and queue states, according to a link selection and transmission rate control policy defined in Section III-A.

Therefore, as illustrated in Fig.~\ref{fig:systemmodel}, the simple two-hop relaying system with on/off channel connectivity can be modeled  as a single queue with controlled arrival rate and departure rate. Let $Q_t\in\mathcal{Q}$ denote the queue state information (QSI) (in number of packets) at the R buffer at the beginning of the $t$-th slot, where $\mathcal{Q}\triangleq\{0,1,\cdots,N_r\}$ denotes the QSI state space. The queue dynamics under the control policy will be illustrated in Section III-B.

\section{Problem Formulation}
\subsection{Control Policy}
For notation convenience, we denote $\mathbf{X}_t\triangleq (Q_t,\mathbf{G}_t)\in\bm{\mathcal{X}\triangleq}\mathcal{Q}\times\bm{\mathcal{G}}$ as the \emph{system state} at the $t$-th slot, where $\bm{\mathcal{X}}$ denotes the system state space. Let $a_{s,t}\in\{0,1\}$ and $a_{r,t}\in\{0,1\}$ denote whether the S-R link or the R-D link is scheduled, respectively, in the $t$-th slot, where 1 denotes scheduled and 0 otherwise. Let $u_{s,t}\in\{0,1,\cdots, R_s\}$ and $u_{r,t}\in\{0,1,\cdots, R_r\}$ denote the transmission rates of S and R in the $t$-th slot, respectively. Given an observed system state $\mathbf{X}$, the link selection action $ (a_s,a_r)\in\{0,1\}^2$ and the transmission rate control action $(u_s,u_r)\in\{0,1,\cdots, R_s\}\times\{0,1,\cdots, R_r\}$ are determined according to a stationary  policy defined below.

\begin{definition}[\text{Stationary Policy}]
  A stationary link selection and transmission rate control policy $\Omega\triangleq(\alpha,\mu)$ is a mapping from the system state $\mathbf{X}\triangleq(Q,\mathbf{G})$ to the link selection action $(a_s,a_r)$ and the transmission rate control action $(u_s,u_r)$, where $\alpha(\mathbf{X})=(a_s,a_r)$ and $\mu(\mathbf{X})=(u_s,u_r)$ satisfy the following constraints:
\begin{enumerate}
  \item $a_s,a_r\in\{0,1\}$;
  \item $a_s+a_r\leq 1$ (orthogonal link selection);
  \item  \noindent$(a_s,a_r)=\begin{cases}(0,0),  & \mathbf{G}=(0,0) \\
            (0,1),  &\mathbf{G}=(0,1) \\
            (1,0),  &  \mathbf{G}=(1,0)

  \end{cases}$\\(at least one link is not connected);
  \item $u_s\in\{0,1,\cdots,\min\{R_s, N_r-Q\}\}$  (departure rate at S);

  \item $u_r\in\{0,1,\cdots,\min\{R_r,Q\}\}$ (departure rate at R).
\end{enumerate}
\label{definition:definition1}
\end{definition}

Note that, our focus for the link selection control is  on the design of $(a_s,a_r)$ when  $\mathbf{G}=(1,1)$. Moreover,
the departure rates (actual transmission rates) of S and R, i.e., $u_s$ and $u_r$, may be smaller than $R_s$ and $R_r$, respectively, due to the following reasons.
When the finite R buffer does not have enough space, to avoid buffer overflow and the resulting packet loss,  $u_s$ is smaller than $R_s$. When the R buffer  does not have enough packets to transmit,  $u_r$ is smaller than $R_r$. Thus, we have the constraints in 4) and 5).

\subsection{MDP Formulation}
Given a stationary control policy $\Omega$ defined in Definition~\ref{definition:definition1}, the queue dynamics at R is given by:
\begin{equation}
  Q_{t+1}=Q_t+a_{s,t}u_{s,t}-a_{r,t}u_{r,t},~\forall t=1,2,\cdots.\label{eqn:queue-dyn-t}
\end{equation}
From Assumption 1 and the queue dynamics in \eqref{eqn:queue-dyn-t}, we can see that the induced random process $\{\mathbf{X}_t\}$ under policy $\Omega$ is a Markov chain with the following transition probability
\begin{equation}\label{eqn:prob}
\Pr[\mathbf{X}_{t+1}|\mathbf{X}_t,\Omega(\mathbf{X}_t)]=\Pr[\mathbf{G}_{t+1}]\Pr[Q_{t+1}|\mathbf{X}_t,\Omega(\mathbf{X}_t)].
\end{equation}

In this paper, we restrict our attention to stationary unichain policies.\footnote{A unichain policy is a policy, under which the induced Markov chain has a single recurrent class (and possibly some transient states)\cite{bertsekas}.} For a given stationary unchain policy $\Omega$, the average system throughput is given by:
\begin{equation}\label{eqn:Rpi}
  \bar{R}^\Omega\triangleq\liminf_{T\to\infty}\frac{1}{T}\sum_{t=1}^T \mathbb{E}\left[r(\mathbf{X}_t,\Omega(\mathbf{X}_t))\right],
\end{equation}
where $r(\mathbf{X}_t,\Omega(\mathbf{X}_t))\triangleq a_{r,t}u_{r,t}$ is the per-stage reward (i.e., the departure rate at R at slot $t$, indicating the number of packets delivered by the two-hop relaying system) and the expectation is taken w.r.t. the measure induced by policy $\Omega$.

 We wish to find an optimal link selection and transmission rate control policy $\Omega^*$ to maximize the average system throughput $\bar{R}^\Omega$ in \eqref{eqn:Rpi}.\footnote{By Little's law, maximizing the average system throughput in Problem~\ref{problem:originalproblem} is equivalent to minimizing the upper bound of the average delay in the relay with a finite buffer.}
\begin{problem}[Stochastic Throughput Optimization]
  \begin{equation}
  \bar{R}^*\triangleq \max_{\Omega}\liminf_{T\to\infty}\frac{1}{T}\sum_{t=1}^T \mathbb{E}\left[r(\mathbf{X}_t,\Omega(\mathbf{X}_t))\right],\label{eqn:problem1}
\end{equation}\label{problem:originalproblem}
where $\Omega$ is a stationary unchain policy satisfying the constraints in Definition~\ref{definition:definition1}.
\end{problem}
Please note that, in Problem~\ref{problem:originalproblem}, we assume the existence of a stationary unichain policy achieving the maximum in \eqref{eqn:problem1}. Latter, in Theorem~\ref{theorem:theorem1}, we shall prove the existence of such a policy.
Problem~\ref{problem:originalproblem} is an infinite horizon average cost MDP, which is well-known to be a difficult problem \cite{bertsekas}. While dynamic programming represents a systematic approach for MDPs, there generally exist only numerical solutions, which do not typically offer many design insights, and are usually not practical due to the curse of dimensionality\cite{bertsekas}.

Fig. \ref{fig:flow} illustrates in the remainder of this paper, how we shall address the above challenges to solve Problem~\ref{problem:originalproblem}. Specifically, in Sections IV and V, we shall analyze the  properties of the optimal policy. Based on these properties, we shall simplify Problem~\ref{problem:originalproblem} to a static optimization problem (Problem~\ref{problem:equivalentproblem})  and develop a low-complexity algorithm (Algorithm~\ref{alg:total}) to solve it. Finally,  we shall obtain the corresponding static optimization problem (Problem~\ref{problem:problemSymm}) for the symmetric case and derive its closed-form optimal solution.

\begin{figure}[t]
\begin{centering}
\includegraphics[scale=0.38]{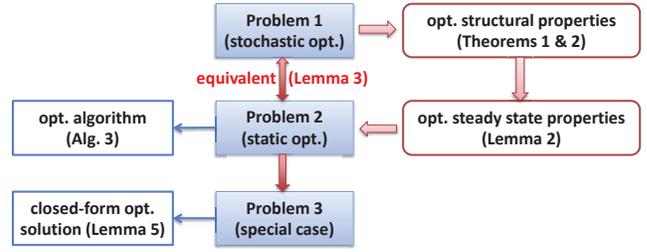}
 \caption{\small{Proposed solution to Problem~\ref{problem:originalproblem}.}}\label{fig:flow}
\end{centering}
\end{figure}

\section{Structure of Optimal Policy}
In this section, we first obtain an equivalent Bellman equation based on reduced state and action spaces. Then, we show that the optimal policy has a threshold-based structure.
\subsection{Optimality Equation}
By exploiting some special structures in our problem, we obtain the following equivalent Bellman equation by reducing the state and action spaces. By solving the Bellman equation, we can obtain the optimal policy to Problem~\ref{problem:originalproblem}.
\begin{theorem}[Equivalent Bellman Equation]
 (i) The optimal transmission rate control policy $\mu^*$  is given by:
\begin{equation}\label{eqn:rate}
\mu^*(\mathbf{X})=\left(\min\{R_s, N_r-Q\},\min\{R_r,Q\}\right), \forall \mathbf{X} \in\bm{\mathcal{X}}.
\end{equation}

(ii) There exists $(\theta, \{V(Q)\})$ satisfying  the following \emph{equivalent Bellman equation}:
\begin{align}\label{eqn:reduceAction}
&\theta+V(Q)=\bar{p}_s\bar{p}_rV(Q)+p_s\bar{p}_rV(\min\{Q+R_s,N_r\})\nonumber\\
&+\bar{p}_sp_r\left(\min\{Q,R_r\}+V([Q-R_r]^+)\right)\nonumber\\
&+p_sp_r\max_{a_r}\Big\{a_r\min\{Q,R_r\}\nonumber\\&+V\big(Q+\bar{a}_r\min\{N_r-Q,R_s\}-a_r\min\{Q,R_r\}\big)\Big\},\nonumber\\
&\hspace{64mm}\forall Q\in\mathcal{Q},
\end{align}
where $[x]^+\triangleq\max\{x,0\}$, $\bar{p}_s\triangleq 1-p_s$, $\bar{p}_r\triangleq 1-p_r$ and  $\bar{a}_r\triangleq 1-a_r$. $\theta=\bar{R}^*$ is the optimal value to Problem~\ref{problem:originalproblem} for all initial state $\mathbf{X}_1\in\bm{\mathcal{X}}$ and $V(\cdot )$ is called the value function.

(iii)
The optimal link selection  policy $\alpha^*$ is given by:
\begin{equation}\label{eqn:reducealpha}
\alpha^*(\mathbf{X})=\left\{
          \begin{array}{ll}
            (0,0), & \hbox{$\mathbf{G}=(0,0)$;} \\
            (0,1), & \hbox{$\mathbf{G}=(0,1)$;} \\
            (1,0), & \hbox{$\mathbf{G}=(1,0)$;} \\
            \left(\bar\alpha_r^*(Q),\alpha_r^*(Q)\right) & \hbox{$\mathbf{G}=(1,1)$.}
          \end{array} \quad \forall \mathbf{X} \in\bm{\mathcal{X}},
        \right.
\end{equation}
where 
\begin{align}
  &\alpha_r^*(Q)\triangleq \arg\max_{a_r}\Big\{a_r\min\{Q,R_r\}\nonumber\\
  &+V\big(Q+\bar{a}_r\min\{N_r-Q,R_s\}-a_r\min\{Q,R_r\}\big)\Big\}, \nonumber\\
  & \hspace{64mm} \forall Q\in\mathcal{Q},\label{eqn:omegaalpha}
\end{align}
and $\bar\alpha_r^*(Q)\triangleq 1-\alpha_r^*(Q)$.
\label{theorem:theorem1}
\end{theorem}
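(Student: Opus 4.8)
The plan is to establish the three claims by reducing Problem~\ref{problem:originalproblem} to a standard average-cost MDP and then applying the classical existence theory together with a sample-path argument that collapses the joint $(Q,\mathbf{G})$ state space down to the QSI $Q$. First I would invoke the well-known result (e.g.\ Proposition~4.2.1 in \cite{bertsekas}) that for a finite-state, finite-action average-cost MDP, if the associated Markov chain is unichain for every stationary policy (or, more weakly, if the optimality equation admits a solution), then there exists a scalar $\theta$ and a bounded function $V(\cdot)$ on the state space satisfying the Bellman equation, $\theta$ is the optimal average reward independent of the initial state, and a stationary policy attaining the maximum on the right-hand side is optimal. Here the state space $\bm{\mathcal{X}}=\mathcal{Q}\times\bm{\mathcal{G}}$ is finite because $N_r<\infty$, and the action space is finite, so the raw Bellman equation $\theta + \tilde V(\mathbf{X}) = \max_{(a_s,a_r),(u_s,u_r)} \{ a_r u_r + \mathbb{E}[\tilde V(\mathbf{X}')\mid \mathbf{X},\text{action}] \}$ has a solution; this also retroactively justifies the existence assumption flagged after Problem~\ref{problem:originalproblem}.

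Next I would carry out the state/action reduction in two stages. For part (i): fix any state and any link-selection action; the per-stage reward $a_r u_r$ is nondecreasing in $u_r$ and the constraint on $u_r$ is just $u_r\le\min\{R_r,Q\}$, so setting $u_r=\min\{R_r,Q\}$ is optimal; for $u_s$, transmitting more packets into the buffer can only (weakly) enlarge the set of future reachable queue states and never decreases future reward — this is the crux of (i) and I would prove it by a coupling/sample-path argument showing that under $u_s=\min\{R_s,N_r-Q\}$ the queue-length process stochastically dominates that under any smaller $u_s$, and the reward $r=a_r u_r$ is monotone in $Q$ through the constraint $u_r\le\min\{R_r,Q\}$. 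Hence $\mu^*(\mathbf{X})=(\min\{R_s,N_r-Q\},\min\{R_r,Q\})$, and the only genuine decision left is $a_r$ when $\mathbf{G}=(1,1)$, since cases 3) of Definition~\ref{definition:definition1} force the link selection in the other three CSI states. Substituting these fixed choices and averaging \eqref{eqn:prob} over the four i.i.d.\ realizations of $\mathbf{G}_{t+1}$ with probabilities $\bar p_s\bar p_r,\ p_s\bar p_r,\ \bar p_sp_r,\ p_sp_r$ shows $\tilde V(Q,\mathbf{G})$ can be taken independent of $\mathbf{G}$ (define $V(Q)$ as this common value, using that the next CSI is independent of everything), which yields exactly \eqref{eqn:reduceAction}; and the maximizer in the $p_sp_r$ term is precisely $\alpha_r^*(Q)$ in \eqref{eqn:omegaalpha}, giving (iii).

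The main obstacle is the monotonicity step inside part (i): one must rule out that "holding back" packets at S (choosing $u_s<\min\{R_s,N_r-Q\}$) could ever help, e.g.\ by keeping the buffer less full and thereby avoiding future overflow-induced waste of S-R opportunities. The resolution is that, because S has an infinite backlog, buffer occupancy is always (weakly) beneficial for throughput — extra packets in the buffer are exactly what enable R to exploit R-D-only slots — so a pathwise coupling argument (couple the CSI sequences and show the high-$u_s$ trajectory keeps a pointwise-larger queue, hence pointwise-larger feasible $u_r$, hence pointwise-larger accumulated reward) closes it. A secondary technical point is verifying the unichain/recurrence hypothesis needed for the existence theorem and for $\theta$ to be initial-state independent; since from any $Q$ one can reach $Q=0$ (via a run of R-D-only slots) and $Q=N_r$ is likewise reachable, the induced chain under the optimal stationary policy has a single recurrent class, so the hypothesis holds and \eqref{eqn:reduceAction} is the desired equivalent Bellman equation.
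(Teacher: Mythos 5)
Your overall strategy---establish existence of a solution to the full-state Bellman equation, reduce the rate control by a sample-path/coupling argument, then collapse the $\mathbf{G}$-dependence by averaging over the i.i.d.\ channel---is indeed the architecture of the paper's Appendix~A proof (it just runs the two steps in the opposite order, doing the coupling first and then invoking Propositions~4.2.5, 4.2.3, 4.2.1 of Bertsekas). The $\mathbf{G}$-reduction step is also essentially right: because $\mathbf{G}_{t+1}$ is independent of the current state and action, the RHS of the raw optimality equation already depends on $\hat V(\mathbf{X}')$ only through $V(Q')\triangleq\mathbb{E}[\hat V(\mathbf{X}')\mid Q']$, and taking the $\mathbf{G}$-expectation of both sides yields \eqref{eqn:reduceAction}.

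However, there is a genuine gap in the heart of part~(i). Your coupling invariant ``the max-$u_s$ trajectory keeps a pointwise-larger queue'' is false as stated: once you also commit to maximal $u_r$ (which you must to reap the claimed reward gain), the candidate-optimal trajectory can drain its buffer \emph{below} that of a lazier policy. For instance with $Q^*_t=Q_t=5$, $R_r=3$, an R-D slot: $u_r^*=3$ gives $Q^*_{t+1}=2$ while the lazy policy with $u_{r,t}=0$ keeps $Q_{t+1}=5$, so $Q^*_{t+1}<Q_{t+1}$. Thus pointwise queue dominance does not propagate, and the induction you sketch would break. The correct fix is exactly the paper's joint inductive invariant: letting $\Delta_t\triangleq\sum_{\tau\le t}(a^*_{r,\tau}u^*_{r,\tau}-a_{r,\tau}u_{r,\tau})$, one proves \emph{simultaneously} that $\Delta_t\ge 0$ \emph{and} $\Delta_t+Q^*_{t+1}\ge Q_{t+1}$---the cumulative-reward credit absorbs the queue deficit that max-$u_r$ creates. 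Relatedly, your myopic argument that $u_r=\min\{R_r,Q\}$ is ``optimal because the per-stage reward is nondecreasing in $u_r$'' ignores the effect of $u_r$ on future queue states (delivering more now means fewer packets buffered for R-D-only slots later); making this rigorous via value-function monotonicity or the property $V(Q+1)-V(Q)\le 1$ would be circular, since those properties (Lemma~\ref{lemma:propertiesofV}) are derived \emph{from} \eqref{eqn:reduceAction}. The sample-path invariant handles $u_s$ and $u_r$ together and avoids this circularity.
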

\begin{proof}
  Please see Appendix A.
\end{proof}

Note that, the four terms in the R.H.S of \eqref{eqn:reduceAction} correspond to the per-stage reward plus the value function of the updated queue state for $\mathbf{G}=(0,0), (1,0), (0,1)$ and $(1,1)$, respectively, under the optimal transmission rate control policy in \eqref{eqn:rate}, the link selection policy in Definition 1 for $\mathbf{G}=(0,0), (1,0)$ and $(0,1)$, and the optimal link selection policy for $\mathbf{G}=(1,1)$. Therefore, the R.H.S of \eqref{eqn:reduceAction} indicates the expectation of  the per-stage reward plus the value function of the updated queue state under the optimal policy, where the expectation is over the channel state $\mathbf{G}$.
\begin{remark}[Reduction of State and Action Spaces]
The Bellman equation in \eqref{eqn:reduceAction} is defined over the QSI state space $\mathcal{Q}$. Thus, the system state space $\mathcal{Q}\times\bm{\mathcal{G}}$ in Definition~\ref{definition:definition1}    is reduced to the QSI state space $\mathcal{Q}$. The action space reduction  can be observed by comparing Definition~\ref{definition:definition1} with   \eqref{eqn:rate} and \eqref{eqn:reducealpha}.
\label{remark:remarkofTheorem1}
\end{remark}

Note that  the closed-form optimal transmission rate control policy ${\mu}^*$  has already been obtained in \eqref{eqn:rate}. The optimal link selection policy ${\alpha}^*$ is determined by the policy $\alpha_r^*$ in \eqref{eqn:omegaalpha}. Thus, we only need to consider the optimal link selection for $\mathbf{G}=(1,1)$.  In the following, we also refer to $\alpha_r^*$ as the optimal link selection policy.
To obtain the optimal policy, it remains to characterize $\alpha_r^*$. From Theorem~\ref{theorem:theorem1}, we can see that $\alpha_r^*$ depends on the QSI state $Q$ through the value function $V(\cdot)$. Obtaining $V (\cdot)$ involves solving the equivalent Bellman equation in \eqref{eqn:reduceAction} for all $Q$. There is no closed-form solution in general\cite{bertsekas}. Brute force solutions such as value iteration and policy iteration are usually impractical for implementation and do not yield many design insights \cite{bertsekas}. Therefore, it is desirable to study the structure of $\alpha_r^*$.

\subsection{Threshold Structure of Optimal Link Selection Policy}
To further simplify the problem and obtain design insights, we study the structure of the optimal link selection policy.
In the existing literature, structural properties of optimal policies are characterized for simple networks by studying properties of the value function.
  For example, most existing works consider the structural analysis of a single queue with either controlled arrival or departure rates \cite{agarwal,ngo,ehsan,cdc}.  However, we control both the arrival and departure rates of the relay queue.
 Moreover, we consider a finite buffer, which has \emph{reflection} effect when the buffer is almost full \cite{finiteline}, and general system parameters, i.e., $R_s, R_r$ and $N_r$.
Therefore, it is more challenging to explore the properties of the value function in our system.

First, by  the \emph{relative value iteration algorithm (RVIA)}\footnote{RVIA is a commonly used numerical method for iteratively computing the value function, which is the solution to the Bellman equation for the infinite horizon average cost MDP \cite[Chapter 4.3]{bertsekas}. The details of RVIA can be found in Appendix B.}, we can iteratively prove the following properties of the value function.
\begin{lemma}[Properties of Value Function]
      The value function $V(Q)$ satisfies the following properties:
    \begin{enumerate}
      \item $V(Q)$ is monotonically non-decreasing in $Q$;\label{eqn:property1}
      \item $V(Q+1)-V(Q)\leq 1$, $Q\in\{0,1,\cdots,N_r-1\}$;\label{eqn:property2}
      \item $V(Q+R_s+R_r+1)-V(Q+R_s+R_r)\leq V(Q+1)-V(Q)$, $Q\in\{0,1,...,N_r-(R_s+R_r+1)\}$.\label{eqn:property3}
    \end{enumerate}
\label{lemma:propertiesofV}
\end{lemma}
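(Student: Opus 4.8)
The plan is to prove all three properties simultaneously by induction through the Relative Value Iteration Algorithm (RVIA). Define the sequence $V_n(\cdot)$ by $V_0 \equiv 0$ and the usual RVIA update: for a fixed reference state (say $Q = 0$), set $V_{n+1}(Q) = (T V_n)(Q) - (T V_n)(0)$, where $T$ is the operator corresponding to the right-hand side of the equivalent Bellman equation \eqref{eqn:reduceAction}. Since Theorem~\ref{theorem:theorem1} guarantees that the MDP is well-posed with a solution to \eqref{eqn:reduceAction}, standard convergence results for RVIA on finite-state average-cost MDPs give $V_n \to V$ pointwise (up to the additive constant). Because monotonicity, the Lipschitz-type bound $V(Q+1)-V(Q)\le 1$, and the "diminishing-increments" bound in property~3 are all preserved under pointwise limits (they are closed conditions on finitely many values), it suffices to show that each $V_n$ satisfies properties~1--3, given that $V_n$ does. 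The base case $V_0\equiv 0$ trivially satisfies all three.

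For the inductive step I would treat the RVIA update term by term. The operator $T$ produces a convex combination (with weights $\bar p_s\bar p_r,\ p_s\bar p_r,\ \bar p_s p_r,\ p_s p_r$) of four "branch" maps of $V_n$: the identity branch $V_n(Q)$; the source branch $V_n(\min\{Q+R_s,N_r\})$; the relay branch $\min\{Q,R_r\}+V_n([Q-R_r]^+)$; and the mixed branch $\max_{a_r}\{\cdots\}$ under $\mathbf{G}=(1,1)$. Since a convex combination inherits monotonicity, a slope bound, and the second-difference bound from its constituents, and since adding a constant (the reference subtraction) is harmless, the work reduces to checking each branch. Property~1 (monotonicity): each of the four maps is a composition of $V_n$ with a non-decreasing, saturating index map, plus a non-decreasing term $\min\{Q,R_r\}$; the max over $a_r$ of monotone functions is monotone — the only care needed is to verify that the argument $Q+\bar a_r\min\{N_r-Q,R_s\}-a_r\min\{Q,R_r\}$ is non-decreasing in $Q$ for each fixed $a_r$, which follows from $\min\{N_r-Q,R_s\}$ decreasing by at most $1$ per unit of $Q$ while $\min\{Q,R_r\}$ increasing by at most $1$. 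Property~2 (slope $\le 1$): for the identity and source branches this is immediate from the hypothesis on $V_n$; for the relay branch, the increment of $\min\{Q,R_r\}$ is at most $1$ while the increment of $V_n([Q-R_r]^+)$ is at most $1$ only when $[Q-R_r]^+$ actually increases, and one checks the two cases ($Q<R_r$ versus $Q\ge R_r$) separately so that at most one of the two terms contributes a unit; the mixed branch is handled by the same case analysis applied to whichever $a_r$ attains the max, using the envelope inequality $\max f(Q+1)-\max f(Q)\le \max(f(Q+1)-f(Q))$.

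Property~3 is the one I expect to be the main obstacle, because it is exactly the statement that captures the reflection effect near the full buffer and it does not decompose as cleanly across branches — the index shifts by $R_s$ (source branch), by $-R_r$ (relay branch), and by state-dependent amounts in the mixed branch interact with the saturation at $0$ and at $N_r$. The strategy I would use is to compare, for the step from $Q$ to $Q+1$ versus from $Q+R_s+R_r$ to $Q+R_s+R_r+1$, the induced index arguments in each branch and show they differ by a shift of exactly $R_s+R_r$ (or more, once a boundary is hit), then invoke property~3 of $V_n$ — and, where a boundary truncation makes the higher-indexed argument stall, invoke monotonicity (property~1) of $V_n$ to bound the higher increment by $0$. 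The mixed branch requires care: one shows that an optimal $a_r$ for the lower state remains feasible for the shifted state and tracks how $\min\{N_r-Q,R_s\}$ and $\min\{Q,R_r\}$ saturate, so that the argument shift is at least $R_s+R_r$ in each sub-case; combined with property~3 of $V_n$ and the envelope inequality for the max, this closes the induction. Finally, passing $n\to\infty$ yields properties~1--3 for $V$, completing the proof.
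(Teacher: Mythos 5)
Your overall framework matches the paper's: initialize $V_0\equiv 0$, run RVIA, prove the three properties hold at every iterate by induction, and pass to the limit using \eqref{eqn:converge}; the paper likewise bounds the iterate differences by evaluating the state-action reward $J_{n+1}$ at a fixed suboptimal action and then splits the expectation into the four $\mathbf{G}$-branches. Your treatment of Properties~1 and~2 is sound, including the observation that a convex combination and a pointwise max inherit monotonicity and the slope-$\le 1$ bound, and the careful two-case check of the relay branch $\min\{Q,R_r\}+V_n([Q-R_r]^+)$.

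Where the proposal has a genuine gap is exactly in the step you flag as the ``main obstacle,'' Property~3, and the reason is concrete. First, the envelope inequality that closes Property~2 does \emph{not} close Property~3: a pointwise maximum of $K$-concave functions need not be $K$-concave (e.g.\ the max of two linear functions is convex), so ``property~3 of $V_n$ plus the envelope inequality for the max'' is not a valid argument for the mixed branch. What one actually needs is the stronger statement $\max_a\bigl[J^{\mathrm{mix}}(Q+K+1,a)-J^{\mathrm{mix}}(Q+K,a)\bigr]\le \min_{a'}\bigl[J^{\mathrm{mix}}(Q+1,a')-J^{\mathrm{mix}}(Q,a')\bigr]$, where $K=R_s+R_r$, and this is a cross-action comparison, not a per-action one. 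Second, your supporting claim that ``the argument shift is at least $R_s+R_r$ in each sub-case'' is false for the cross-action case: when the optimizer at $Q+K+1$ takes $a_r=1$ while the optimizer at $Q$ takes $a_r=0$, the argument of $V_n$ at the shifted pair is $Q+K-R_r=Q+R_s$ and at the base pair is $Q+R_s$ as well, so the shift is $0$. The crucial fact, which the paper isolates, is that this coincidence is an \emph{exact alignment}: the relay-branch first increment at $Q+R_s+R_r$ equals the source-branch first increment at $Q$, namely $V_n(Q+R_s+1)-V_n(Q+R_s)$ in both cases (the ``$C_2=B_1$'' identity in Appendix~B). It is precisely this equality, combined with $B_2\le B_1$ and $C_2\le C_1$ from the within-branch shift by $K$, that lets the paper dispose of the four cases $(\alpha^*_r(Q+K+1),\alpha^*_r(Q))\in\{(0,0),(1,0),(1,1),(0,1)\}$; in particular the ``anti-threshold'' case $(0,1)$ needs $B_2\le B_1=C_2\le C_1$. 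Without this alignment, your Property~3 induction does not close.
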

\begin{proof}
  Please see Appendix B.
\end{proof}

\begin{remark}[Interpretation of Lemma~\ref{lemma:propertiesofV}]Property 1 generally holds for single-queue systems and is widely studied in the existing literature.
Property 2 results from the throughput maximization problem considered in this work.  This property does not hold for sum queue length minimization problems considered in most existing literature.
Property 3 indicates that $V(Q)$ is $K$-concave\footnote{A function $f(x)$: $\mathbb{Z}^+\cup\{0\} \rightarrow \mathbb{R}$ is $K$-concave (where $K\in\mathbb{Z}^+$) if   $f(x+K+1)-f(x+K)\leq f(x+1)-f(x)$.} with $K=R_s+R_r\geq2$. This stems from the relay queue with both controlled arrival and departure rates. In contrast, most existing works consider a single queue with either controlled arrival rate or departure rate, and the corresponding value function is 1-concave.
\end{remark}

Next, define the state-action reward function as follows\cite{ngo}
\begin{align}
  &J(Q,a_r )\triangleq\bar{p}_s\bar{p}_rV(Q)+p_s\bar{p}_rV(\min\{Q+R_s,N_r\})\nonumber\\
&+\bar{p}_sp_r\left(\min\{Q,R_r\}+V([Q-R_r]^+)\right)\nonumber\\&+p_sp_r\Big[a_r \min\{Q,R_r\}\nonumber\\&+V\big(Q+\bar{a}_r \min\{N_r-Q,R_s\}-a_r \min\{Q,R_r\}\big)\Big].\label{eqn:state_action_func}
\end{align}
Note that $J(Q,a_r)$ is related to the R.H.S. of the Bellman equation in \eqref{eqn:reduceAction}.
The R.H.S. of \eqref{eqn:state_action_func} indicates the expectation of the per-stage reward plus the value function of the updated queue state under the optimal transmission rate control policy  in \eqref{eqn:rate}, the link selection policy in Definition 1 for $\mathbf{G}=(0,0), (1,0)$ and $(0,1)$, and any link selection policy satisfying $a_s,a_r\in\{0,1\}$ and $a_s+a_r=1$ for $\mathbf{G}=(1,1)$.

By Lemma~\ref{lemma:propertiesofV} and \eqref{eqn:state_action_func}, we can show that the state-action reward function $J(Q,a_r )$ is supermodular\footnote{A function $f(x,y)$: $\mathbb{Z}^2\rightarrow \mathbb{R}$ is supermodular in  $(x,y)$ if
  $f(x+1,y+1)-f(x+1,y)\geq f(x,y+1)-f(x,y)$\cite{puterman}.} in $(Q,a_r )$, i.e.,
\begin{equation}\label{eqn:super}
  J(Q+1,1)-J(Q+1,0)\geq J(Q,1)-J(Q,0).
\end{equation}
By \cite[Lemma 4.7.1]{puterman}, supermodularity is a sufficient condition for the monotone policies to be optimal. Thus, we have the following theorem.
\begin{theorem}[Threshold Structure of Optimal Policy]
  There exists   $Q_{th}^*\in \mathcal Q$ such that the optimal link selection policy for $\mathbf{G}=(1,1)$ has the threshold-based structure, i.e.,
\begin{equation}\label{eqn:threshold}
  \alpha_r^*(Q)=\left\{
    \begin{array}{ll}
      1, & \hbox{if $Q>Q_{th}^*$;} \\
      0, & \hbox{otherwise.}
    \end{array}
  \right.
\end{equation}
$Q_{th}^*$ is the optimal threshold.
\label{theorem:theorem2}
\end{theorem}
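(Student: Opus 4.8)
The plan is to derive the threshold structure of $\alpha_r^*$ directly from the supermodularity of $J(Q,a_r)$ in \eqref{eqn:super}, which in turn follows from the concavity/monotonicity properties of the value function in Lemma~\ref{lemma:propertiesofV}. The logical skeleton is: first establish \eqref{eqn:super}, then invoke the standard result \cite[Lemma 4.7.1]{puterman} that supermodularity of a state-action reward function implies the existence of a monotone (threshold) optimal action as a function of the state.

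First I would prove \eqref{eqn:super}. Looking at \eqref{eqn:state_action_func}, the first three terms of $J(Q,a_r)$ do not depend on $a_r$, so they cancel in the difference $J(Q,1)-J(Q,0)$. Hence \eqref{eqn:super} reduces to a statement about the $p_sp_r$-weighted last bracket only. Writing $f(Q) \triangleq \min\{Q,R_r\}$ and $g(Q)\triangleq \min\{N_r-Q,R_s\}$, I need
\[
\bigl[f(Q{+}1)+V(Q{+}1-f(Q{+}1))\bigr]-V(Q{+}1+g(Q{+}1))
\ \geq\
\bigl[f(Q)+V(Q-f(Q))\bigr]-V(Q+g(Q)).
\]
I would handle this by a case analysis on where $Q$ and $Q+1$ sit relative to the ``kinks'' $Q=R_r$ (where $f$ saturates) and $Q=N_r-R_s$ (where $g$ saturates), giving a small number of regimes. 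In the generic interior regime $f(Q)=R_r$, $g(Q)=R_s$, the inequality becomes $V(Q{+}1{-}R_r)-V(Q{-}R_r) \geq V(Q{+}1{+}R_s)-V(Q{+}R_s)$, i.e.\ a shift of $R_s+R_r$ in the argument, which is exactly Property~3 ($(R_s{+}R_r)$-concavity) of Lemma~\ref{lemma:propertiesofV}. Near the full-buffer boundary, $g$ saturates so $V(Q{+}1{+}g(Q{+}1))=V(N_r)=V(Q{+}g(Q))$ and the needed inequality collapses to monotonicity of the increments of $V$ over a shorter gap, again covered by Property~3 together with Property~1. Near the empty-buffer boundary $f$ is not yet saturated and one uses Property~1 (monotonicity, so $f(Q{+}1)+V(Q{+}1{-}f(Q{+}1)) = (Q{+}1)+V(0) \geq Q+V(0) = f(Q)+V(Q{-}f(Q))$) together with Property~1 on the $g$-terms. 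Property~2 ($V(Q{+}1)-V(Q)\le 1$) is what ensures the terms $f(Q{+}1)-f(Q)$ versus $V$-increments balance correctly in the mixed regime where $Q<R_r\le Q+1$.

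Once \eqref{eqn:super} holds, the conclusion is essentially immediate: by \eqref{eqn:omegaalpha}, $\alpha_r^*(Q)\in\arg\max_{a_r\in\{0,1\}} J(Q,a_r)$, and supermodularity of $J$ in $(Q,a_r)$ means that $J(Q,1)-J(Q,0)$ is non-decreasing in $Q$. Therefore the set $\{Q\in\mathcal{Q}: J(Q,1)-J(Q,0)>0\}$ is an ``upper'' set of the form $\{Q>Q_{th}^*\}$ for some $Q_{th}^*\in\mathcal{Q}$ (taking $Q_{th}^*=N_r$ if the set is empty, and adopting the tie-breaking convention $\alpha_r^*=0$ when the difference is zero, which is consistent with \eqref{eqn:threshold}). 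This gives \eqref{eqn:threshold}. One should also remark that existence of a maximizer is not an issue since $a_r$ ranges over the finite set $\{0,1\}$, and that $Q_{th}^*\in\mathcal Q$ by construction.

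The main obstacle is the boundary case analysis in proving \eqref{eqn:super}: the finite buffer introduces the reflection at $Q=N_r$ through $g(Q)=\min\{N_r-Q,R_s\}$, and the empty-buffer reflection at $Q=0$ through $[Q-R_r]^+$ and $f(Q)=\min\{Q,R_r\}$, so the difference $J(\cdot,1)-J(\cdot,0)$ is piecewise-defined with several breakpoints, and each piece must be checked against the appropriate one of the three properties in Lemma~\ref{lemma:propertiesofV}. Getting the bookkeeping right at the transition points between regimes (where, e.g., $Q=R_r$ or $Q=N_r-R_s$, and where $f$ or $g$ changes slope by one) is the delicate part, and is precisely where Properties~2 and~3 — rather than mere monotonicity — are indispensable. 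This is exactly why Lemma~\ref{lemma:propertiesofV} was stated with the stronger $K$-concavity with $K=R_s+R_r$ rather than ordinary concavity.
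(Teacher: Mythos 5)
Your proposal follows essentially the same route as the paper's Appendix C: cancel the first three ($a_r$-independent) terms of $J$, reduce supermodularity to a pointwise inequality on the last bracket, do a case split across the kinks of $\min\{Q,R_r\}$ and $\min\{N_r-Q,R_s\}$, and then invoke \cite[Lemma 4.7.1]{puterman}. One small correction in your bookkeeping for the empty-buffer regime $Q\le R_r-1$ with the $g$-term not yet saturated: you say "Property~1 on the $g$-terms", but Property~1 only gives $V(Q{+}1{+}R_s)-V(Q{+}R_s)\ge 0$, which is the wrong direction; what you actually need there is Property~2, $V(Q{+}1{+}R_s)-V(Q{+}R_s)\le 1$, so that the $+1$ increment from $f$ dominates — and this is needed for all $Q\le R_r-1$ (paper's cases 1(iii) and 2(iii)), not just at the single boundary point $Q=R_r-1$ that you label the "mixed regime".
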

\begin{proof}
Please see Appendix C.
\end{proof}

\begin{remark}[Interpretation of Theorem~\ref{theorem:theorem2}]
By Theorem~\ref{theorem:theorem2}, we know that when $\mathbf{G}=(1,1)$, it is optimal to schedule the R-D link if $Q>Q_{th}^*$ and to schedule the S-R link otherwise. The intuition is as follows. When the relay queue length is large ($Q>Q_{th}^*$), the S-R transmission opportunities may be wasted when $\mathbf{G}=(1,0)$ due to the overflow of the finite R buffer. Therefore, when $Q>Q_{th}^*$, we should reduce the relay queue length at $\mathbf{G}=(1,1)$. When the relay queue length is small ($Q\leq Q_{th}^*$), the R-D transmission opportunities may be wasted when $\mathbf{G}=(0,1)$, as there may not be enough packets left to transmit. Therefore, when  $Q\leq Q_{th}^*$, we should schedule the S-R link at $\mathbf{G}=(1,1)$. These design insights also hold for two-hop relaying systems with  multiple relays which are equipped with finite relay buffers.
 \label{remark:remarkofTheorem2}
\end{remark}

\section{Optimal Solution for General Case}
In this section, we first obtain a simplified static optimization problem for Problem~\ref{problem:originalproblem} by making use of the structural properties of the optimal policy in Theorems~\ref{theorem:theorem1} and \ref{theorem:theorem2}. Then, based on the special structure, we develop a low-complexity algorithm to solve the static optimization problem.
\subsection{Recurrent Class}\label{app:generalcaseA}

By the structural properties of the optimal policy in Theorems~\ref{theorem:theorem1} and~\ref{theorem:theorem2}, we can restrict our attention to the optimal transmission rate control in \eqref{eqn:rate} and a threshold-based link selection policy $\alpha_r$ for $\mathbf{G}=(1,1)$, i.e.,
\begin{equation}\label{eqn:thresholdlemma2}
  \alpha_r(Q)=\left\{
    \begin{array}{ll}
      1, & \hbox{if $Q>Q_{th}$;} \\
      0, & \hbox{otherwise.}
    \end{array}
  \right.
\end{equation}
 where  $Q_{th}\in\mathcal Q$ is the threshold.
In the following, we use $\{Q_t\}$ to denote the relay queue state process  under the policies in \eqref{eqn:rate} and \eqref{eqn:thresholdlemma2}. $\{Q_t\}$ is a stationary Discrete-Time Markov Chain (DTMC)\cite{gallager}, the transition probabilities of which are determined by the threshold $Q_{th}$ and the statistics of the CSI (i.e., $p_s$ and $p_r$).
Fig.~\ref{fig:transprob}  and  Fig.~\ref{fig:generalex} illustrate the transition from any state $Q\in\mathcal{Q}$ and the transition diagram for $\{Q_t\}$, respectively, under the optimal transmission rate control in \eqref{eqn:rate} and  the threshold-based link selection control in \eqref{eqn:thresholdlemma2}.

\begin{figure}[!h]
\begin{minipage}[ct]{.5\linewidth}
        \centering
        \includegraphics[scale=.44]{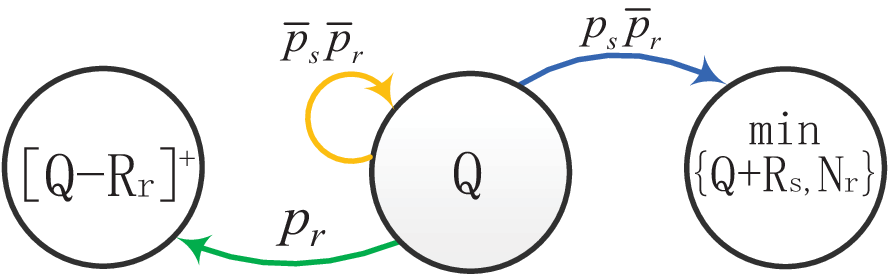}
        \centering
        \subcaption{\small{$Q>Q_{th}$}}
\end{minipage}%
\begin{minipage}[ct]{.5\linewidth}
\centering
        \includegraphics[scale=.44]{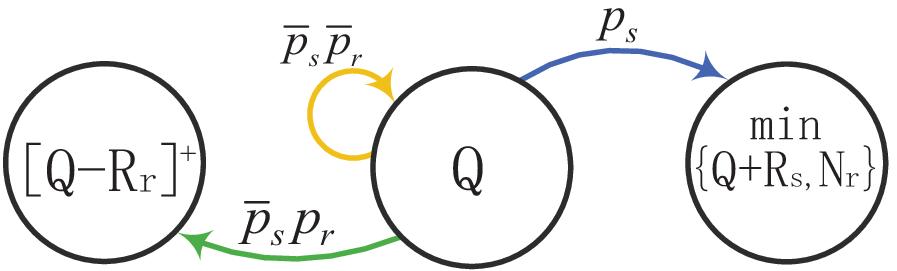}
         \subcaption{\small{$Q\leq Q_{th}$}}
\end{minipage}
        \caption{\small{Illustration of transitions from  state $Q\in\mathcal{Q}$.}}\label{fig:transprob}
\end{figure}
\begin{figure}[!h]
\begin{minipage}[ct]{.5\linewidth}

        \centering
        \includegraphics[scale=.38]{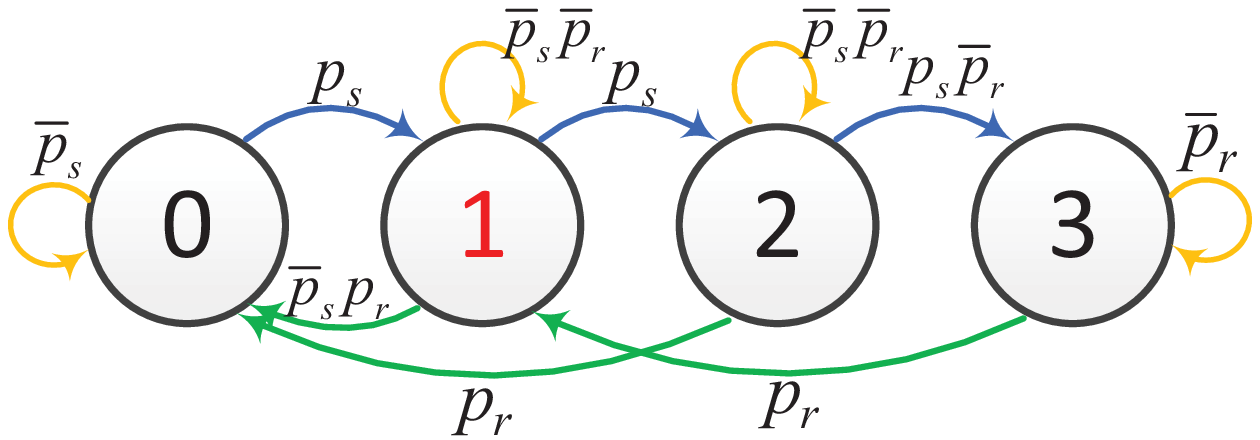}
        \centering
        \subcaption{\small{$l=0$, $R_s=2$, $R_r=1$.}}
\end{minipage}%
\begin{minipage}[ct]{.5\linewidth}
\centering
        \includegraphics[scale=.38]{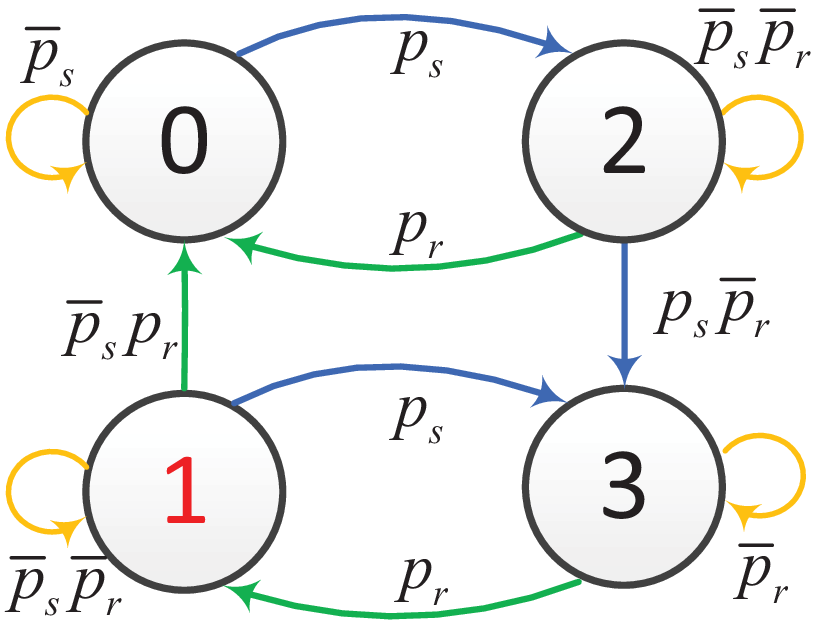}
         \subcaption{\small{$l\neq0$, $R_s=2$, $R_r=2$.}}
\end{minipage}
        \caption{\small{Illustration of the transition diagram of $\{Q_t\}$. $\mathcal{Q}=\{0,1,2,3\}$, $N_r=3$ and $Q_{th}=1$.}}\label{fig:generalex}
\end{figure}

Next, we study the steady-state probabilities of $\{Q_t\}$.
Note that, the steady-state probability of each transient state is zero \cite{gallager}, and hence the throughput of the transient states will not contribute to the ergodic throughput.
In other words, the ergodic system throughput is equal to the average throughput over the recurrent class of $\{Q_t\}$. Thus, to calculate the ergodic throughput, we first characterize the recurrent class of $\{Q_t\}$.
Let $R_s/R_r=a/b$ where $a$ and $b$ are two positive integers having no factors in common. Denote
\begin{equation}
  R\triangleq R_s/a~(=R_r/b)\label{eqn:R}.
\end{equation}
Since $N_r > \max\{R_s,R_r\}$, there exist $n\in\{1,\cdots,\floor*{\frac{N_r}{R}}\}$ and $l\in\{0,\cdots,R-1\}$ such that
\begin{equation}
  N_r=nR+l.\label{eqn:Nr}
\end{equation}
Using the B\'{e}zout's identity, we characterize the recurrent class $\mathcal{C}$ of $\{Q_t\}$ in the following lemma.
\begin{lemma}[Recurrent Class]
  For any $R_s, R_r,N_r$, under the optimal transmission rate control in \eqref{eqn:rate} and a threshold-based link selection policy in \eqref{eqn:thresholdlemma2} with any $Q_{th}\in\mathcal Q$, the recurrent class $\mathcal{C}$ of $\{Q_t\}$ is given by
\begin{equation*}\mathcal{C}=\left\{
                                   \begin{array}{ll}
                                      \{0,R,2R,\cdots,nR\}, & \hbox{if $l=0$} \\
                                      \{0,R,2R,\cdots,nR,l,l+R,l+2R,\cdots,N_r\}, & \hbox{if $l\neq 0$}
                                    \end{array}
                                  \right.
\end{equation*}
where $R$ is given by \eqref{eqn:R} and $l,n$ satisfy \eqref{eqn:Nr}. The size of $\mathcal{C}$ is $|\mathcal{C}|=(l+1)(n+1)$.
\label{lemma:recurrentclass}
\end{lemma}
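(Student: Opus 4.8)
The plan is to分析 the queue dynamics under the optimal transmission rate control \eqref{eqn:rate} and the threshold policy \eqref{eqn:thresholdlemma2}, showing that from an arbitrary starting state, the process is eventually absorbed into the claimed set $\mathcal{C}$, and that $\mathcal{C}$ itself is communicating and closed. First I would write down the one-step transitions explicitly. Under \eqref{eqn:rate}, when $\mathbf{G}=(1,0)$ the queue moves $Q \mapsto \min\{Q+R_s, N_r\}$; when $\mathbf{G}=(0,1)$ it moves $Q \mapsto [Q-R_r]^+$; when $\mathbf{G}=(1,1)$ it moves $Q \mapsto \min\{Q+R_s,N_r\}$ if $Q \le Q_{th}$ and $Q \mapsto [Q-R_r]^+$ if $Q > Q_{th}$; and when $\mathbf{G}=(0,0)$ it stays. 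Since $p_s, p_r \in (0,1)$, each of the events $\mathbf{G}=(1,0)$ and $\mathbf{G}=(0,1)$ occurs with positive probability every slot, so the reachability structure is governed by repeated application of the two maps $U(Q)=\min\{Q+R_s,N_r\}$ (``up'') and $D(Q)=[Q-R_r]^+$ (``down'').

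Next I would establish closure: I claim that if $Q$ is a multiple of $R$ with $Q \le nR$, or $Q = l + kR$ with $l+kR \le N_r$ (the $l\neq 0$ case), then both $U(Q)$ and $D(Q)$ land in $\mathcal{C}$. The key arithmetic facts are that $R_s = aR$ and $R_r = bR$ are both multiples of $R$, so adding or subtracting them preserves the residue class mod $R$; the only subtlety is the truncation at $0$ and at $N_r$. For the floor at $0$: $[Q-R_r]^+$ is either $Q - R_r$ (still in $\mathcal{C}$) or $0 \in \mathcal{C}$. For the cap at $N_r$: here I would use \eqref{eqn:Nr}, $N_r = nR+l$ — so $N_r$ itself is of the form $l + nR$, hence in $\mathcal{C}$ when $l \neq 0$, and $N_r = nR \in \mathcal{C}$ when $l=0$; and $\min\{Q+R_s, N_r\}$ is either $Q+aR$ (same residue, in range) or exactly $N_r$. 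This shows $\mathcal{C}$ is closed. Then I would show $\mathcal{C}$ is a single communicating class: from any state in $\mathcal{C}$, repeatedly applying $D$ drives the queue down to either $0$ or $l$ (whichever residue it carries), and from $0$ (resp.\ $l$) repeatedly applying $U$ climbs through all of $\{0,R,\dots,nR\}$ (resp.\ $\{l,l+R,\dots\}$); to cross between the two residue chains I would invoke B\'{e}zout: since $\gcd(R_s,R_r)=R$, some nonnegative combination of up/down moves (respecting the truncations) realizes a net displacement hitting the other chain — this is where the identity is really used.

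Finally I would argue that every state \emph{not} in $\mathcal{C}$ is transient, by showing $\mathcal{C}$ is reached from any $Q_0 \in \mathcal{Q}$ with positive probability in finitely many steps: driving down with $D$ repeatedly reaches $[Q_0 \bmod R_r]$-type small values, and a short explicit argument (again via B\'{e}zout, combining a few up-moves that saturate at $N_r$ with down-moves) shows one can always reach $0$, which lies in $\mathcal{C}$. Counting $|\mathcal{C}|$ is then immediate: in the $l=0$ case $\mathcal{C}=\{0,R,\dots,nR\}$ has $n+1$ elements, consistent with $(l+1)(n+1)=(0+1)(n+1)$; in the $l\neq 0$ case the two arithmetic progressions $\{0,R,\dots,nR\}$ and $\{l,l+R,\dots,l+nR=N_r\}$ are disjoint (distinct residues mod $R$ since $0 < l < R$) and each has $n+1$ elements, giving $2(n+1)$; I should double-check whether the intended count $(l+1)(n+1)$ matches — it does only when $l=1$, so I expect the progression in the $l \neq 0$ case actually runs over all residues $0,1,\dots,l$ rather than just $0$ and $l$, i.e.\ $\mathcal{C}=\{jR + i : 0\le j\le n,\ 0 \le i \le l\}$, and I would reconcile the lemma's displayed formula accordingly before finalizing. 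The main obstacle is precisely this bookkeeping: carefully tracking which residues mod $R$ are generated once the truncation at $N_r=nR+l$ is allowed to ``fold'' one chain onto others, and confirming the resulting set is exactly closed under $U$ and $D$ — the B\'{e}zout step is conceptually the crux but the truncation edge-cases are where the real care is needed.
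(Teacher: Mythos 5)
Your plan --- explicit one-step maps $U(Q)=\min\{Q+R_s,N_r\}$ and $D(Q)=[Q-R_r]^+$, closure of $\mathcal{C}$, B\'{e}zout for irreducibility within $\mathcal{C}$, and transience of the rest --- is the same as the paper's. However, two of the details you flag as ``to be reconciled'' would actually lead you astray, and for the same arithmetic reason. Both $U$ and $D$ shift $Q$ by a multiple of $R$ (since $R_s=aR$, $R_r=bR$), so the residue of $Q$ modulo $R$ is preserved \emph{except} when a truncation bites: saturation at $N_r=nR+l$ puts you in residue class $l$, and flooring at $0$ puts you in residue class $0$. Consequently exactly the residue classes $0$ and $l$ are visited; the explicit set displayed in the lemma is correct, its size for $l\neq 0$ is $2(n+1)$, and the printed count $(l+1)(n+1)$ is a typo in the statement (the two agree only when $l\le 1$; the paper's own proof produces $\mathcal{C}_1\cup\mathcal{C}_2$ with $|\mathcal{C}_1|=|\mathcal{C}_2|=n+1$). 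Your proposed fix, taking $\mathcal{C}$ to be the full rectangle $\{jR+i: 0\le j\le n,\ 0\le i\le l\}$, is not what the dynamics produce, so do not rewrite the set. The same residue observation shows that ``crossing between the two residue chains via B\'{e}zout'' cannot be the mechanism: any B\'{e}zout combination of $\pm R_s,\pm R_r$ is a multiple of $R$ and preserves the residue. The crossing is effected only by hitting a boundary, which is exactly how the paper argues it: $D$ applied $\lceil N_r/R_r\rceil$ times reaches $0$ exactly from any state (the paper phrases this as the CSI staying $(0,1)$ for enough consecutive slots), and $U$ applied enough times saturates at $N_r$ (CSI staying $(1,0)$), so $0$ and $N_r$ are accessible from every state and from each other. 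B\'{e}zout is invoked only to show that all states \emph{within} a fixed residue class are reachable from $0$ (resp.\ from $N_r$). That boundary-saturation step also makes your transience argument immediate, with no B\'{e}zout bookkeeping needed there.
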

\begin{proof}
  Please see Appendix D.
\end{proof}

Note that  $\mathcal{C}\subseteq \mathcal Q$ and $\mathcal{C}$ is the same for any $Q_{th}\in\mathcal Q$ $(|\mathcal{Q}|=N_r+1)$.

\subsection{Equivalent Problem}
We first consider a threshold-based policy in \eqref{eqn:thresholdlemma2} with the threshold chosen from $\mathcal{C}$ instead of $\mathcal Q$. Denote this threshold as $q_{th}$. We wish to find the optimal threshold $q_{th}^*\in\mathcal{C}$ to maximize the ergodic system throughput (i.e., the ergodic reward of $\{Q_t\}$).
Later, in Lemma~\ref{lemma:relationship12}, we shall show the relationship between $q_{th}^*\in\mathcal{C}$ and $Q_{th}^*\in\mathcal{Q}$.

As illustrated in Section V-A, we focus on the computation of the average throughput over the recurrent class $\mathcal{C}$.
Given $q_{th}$, we can express the transition probability  from $i$ to $j$ as $p_{i,j}(q_{th})$, where $i,j\in \mathcal{C}$. Let $\mathbf{P}(q_{th})\triangleq \left(p_{i,j}(q_{th})\right)_{i,j\in\mathcal{C}}$ and $\bm{\pi}(q_{th})\triangleq \left(\pi_i(q_{th})\right)_{i\in\mathcal{C}}$ denote the transition probability matrix and  the steady-state probability row vector of the recurrent class $\mathcal{C}$, respectively. Note that $\mathbf{P}(q_{th})$ is  fully determined by $q_{th}$ and the statistics of the CSI (i.e., $p_s$ and $p_r$), and can be easily obtained, as illustrated in Fig. \ref{fig:transprob}.
By the Perron-Frobenius theorem\cite{gallager}, $\bm{\pi}(q_{th})$ can be computed from the following system of linear equations:
\begin{align}
\begin{cases}
\bm{\pi}(q_{th})\mathbf{P}(q_{th})=\bm{\pi}(q_{th})\\
||\bm{\pi}(q_{th})||=1
\end{cases}.\label{eqn:comp-pi}
\end{align}
 Let $r_i(q_{th})$ denote the average departure rate at state $i\in\mathcal Q$ under the threshold $q_{th}$.
 According to the threshold-based link selection policy in \eqref{eqn:thresholdlemma2}, we know that: (i) if queue state $i> q_{th}$, the R-D link is selected when $\mathbf{G}=(0,1)$ or $\mathbf{G}=(1,1)$; (ii) if queue state $i\leq q_{th}$, the R-D link is selected only when $\mathbf{G}=(0,1)$. Thus, we have:
\begin{equation}
  r_i(q_{th})=\left\{
           \begin{array}{ll}
             p_r\min\{i,R_r\}, & \hbox{if $i > q_{th}$;} \\
             \bar{p}_sp_r\min\{i,R_r\}, & \hbox{otherwise.}
           \end{array}
         \right.\label{eqn:comp-ri}
\end{equation}
Let $\mathbf{r}(q_{th})\triangleq \left(r_i(q_{th})\right)_{i\in\mathcal{C}}$ denote the average departure rate column vector of the recurrent class $\mathcal{C}$. Therefore, the ergodic system throughput  can be expressed as $\bm{\pi}(q_{th})\mathbf{r}(q_{th})$.

Now, we formulate a static optimization problem to maximize the ergodic system throughput as below.
\begin{problem}[Equivalent Optimization Problem]
  \begin{align}
    \bar{r}^*\triangleq \max_{q_{th}\in\mathcal{C}}~\bm{\pi}(q_{th})\mathbf{r}(q_{th}).\label{eqn:problem2}
\end{align}
\label{problem:equivalentproblem}
\end{problem}
Note that, $q_{th}^*\in{\mathcal{C}}$ is the optimal solution to Problem~\ref{problem:equivalentproblem}.
and $Q_{th}^*\in\mathcal Q$ is the optimal threshold to Problem~\ref{problem:originalproblem}.
The following lemma summarizes the relationship between $q_{th}^*$ and  $Q_{th}^*$.
\begin{lemma}[Relationship between Problem~\ref{problem:originalproblem} and Problem~\ref{problem:equivalentproblem}]
  The optimal values to Problems 1 and 2 are the same, i.e., $\bar{R}^*=\bar{r}^*$. If $q^*_{th}=N_r$, then $Q_{th}^*=q^*_{th}$. If $q^*_{th}<N_r$, then any threshold $Q_{th}^*\in\{Q|q_{th}^*\leq Q<q_{th,next}^*,Q\in\mathcal{Q}\}$ is optimal to Problem 1, where $q_{th,next}^*\triangleq\min\{i|i>q_{th}^*,i\in\mathcal{C}\}$.\label{lemma:relationship12}
\end{lemma}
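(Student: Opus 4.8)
The plan is to show two things: first that restricting the threshold to the recurrent class $\mathcal{C}$ (Problem~\ref{problem:equivalentproblem}) loses nothing relative to allowing it to range over $\mathcal{Q}$ (the threshold structure in Problem~\ref{problem:originalproblem}), and second that the optimal value $\bar r^*$ of Problem~\ref{problem:equivalentproblem} in fact equals the MDP optimal value $\bar R^*$. By Theorems~\ref{theorem:theorem1} and~\ref{theorem:theorem2}, an optimal policy for Problem~\ref{problem:originalproblem} is given by the rate control in \eqref{eqn:rate} together with a threshold link-selection policy of the form \eqref{eqn:thresholdlemma2} for some $Q_{th}^*\in\mathcal{Q}$; under such a policy $\{Q_t\}$ is a DTMC whose recurrent class is $\mathcal{C}$ by Lemma~\ref{lemma:recurrentclass}, and its ergodic reward is exactly $\bm\pi(Q_{th}^*)\mathbf r(Q_{th}^*)$ (computed over $\mathcal{C}$, since transient states carry zero steady-state probability). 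Hence $\bar R^* = \bm\pi(Q_{th}^*)\mathbf r(Q_{th}^*)$ for the optimal $Q_{th}^*$. The key observation is then that the transition matrix $\mathbf P(\cdot)$ restricted to $\mathcal{C}$, and the reward vector $\mathbf r(\cdot)$ restricted to $\mathcal{C}$, depend on the threshold only through which states of $\mathcal{C}$ lie above it and which lie at or below it: from the definitions \eqref{eqn:comp-ri} and the transition structure in Fig.~\ref{fig:transprob}, both $\mathbf P$ and $\mathbf r$ are constant as the threshold varies within an interval $\{Q : q_{th}^* \le Q < q_{th,next}^*\}$ between consecutive elements of $\mathcal{C}$ (here $q_{th,next}^* \triangleq \min\{i : i > q_{th}^*,\, i\in\mathcal{C}\}$, with the convention that the interval is $\{N_r\}$ when $q_{th}^* = N_r$).

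Carrying this out, I would first establish the "piecewise-constant in the threshold" claim: fix any $Q\in\mathcal{Q}$ and let $q_{th}=\max\{i\in\mathcal{C} : i\le Q\}$ be the largest element of $\mathcal{C}$ not exceeding $Q$; then for every state $i\in\mathcal{C}$ we have $i>Q \iff i > q_{th}$ and $i\le Q \iff i\le q_{th}$, because no element of $\mathcal{C}$ lies strictly between $q_{th}$ and $Q$. Consequently the link-selection decision \eqref{eqn:thresholdlemma2} restricted to states in $\mathcal{C}$ is identical under threshold $Q$ and under threshold $q_{th}$, so $\mathbf P(Q)=\mathbf P(q_{th})$ and $\mathbf r(Q)=\mathbf r(q_{th})$ on $\mathcal{C}$, whence $\bm\pi(Q)=\bm\pi(q_{th})$ and $\bm\pi(Q)\mathbf r(Q)=\bm\pi(q_{th})\mathbf r(q_{th})$. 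This shows that the map $Q\mapsto\bm\pi(Q)\mathbf r(Q)$ over $\mathcal{Q}$ takes exactly the same set of values as the map $q_{th}\mapsto\bm\pi(q_{th})\mathbf r(q_{th})$ over $\mathcal{C}$, so maximizing over $\mathcal{C}$ is without loss: $\max_{Q\in\mathcal{Q}}\bm\pi(Q)\mathbf r(Q) = \max_{q_{th}\in\mathcal{C}}\bm\pi(q_{th})\mathbf r(q_{th}) = \bar r^*$. Combining with the previous paragraph gives $\bar R^* = \bar r^*$.

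For the second part, take $q_{th}^*\in\mathcal{C}$ optimal for Problem~\ref{problem:equivalentproblem}. If $q_{th}^* = N_r$, then every $Q\in\mathcal{Q}$ with $q_{th}^*\le Q$ forces $Q = N_r = q_{th}^*$, so trivially $Q_{th}^* = q_{th}^*$. If $q_{th}^* < N_r$, then for any $Q$ in the interval $\{Q\in\mathcal{Q} : q_{th}^*\le Q < q_{th,next}^*\}$ the piecewise-constant claim gives $\bm\pi(Q)\mathbf r(Q) = \bm\pi(q_{th}^*)\mathbf r(q_{th}^*) = \bar r^* = \bar R^*$, so the threshold policy \eqref{eqn:thresholdlemma2} with threshold $Q$ achieves the MDP optimum and is therefore an optimal $Q_{th}^*$ for Problem~\ref{problem:originalproblem}. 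I expect the main obstacle to be the bookkeeping in the piecewise-constant claim — one must verify carefully, using the explicit description of $\mathcal{C}$ in Lemma~\ref{lemma:recurrentclass} and the transition rules in Fig.~\ref{fig:transprob}, that changing the threshold between two consecutive elements of $\mathcal{C}$ genuinely affects no transition probability or per-state reward among states that are actually visited, and that the DTMC restricted to $\mathcal{C}$ remains irreducible so that $\bm\pi$ is uniquely determined by \eqref{eqn:comp-pi}; the rest is a short argument chaining these equalities together.
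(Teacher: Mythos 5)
Your proposal is correct and takes essentially the same route as the paper's proof: both hinge on the observation that, under the rate control \eqref{eqn:rate} and a threshold policy \eqref{eqn:thresholdlemma2}, the ergodic throughput depends only on the link-selection decisions at the recurrent states in $\mathcal{C}$, so $\mathbf{P}$ and $\mathbf{r}$ on $\mathcal{C}$ (hence $\bm\pi\mathbf{r}$) are constant as $Q_{th}$ ranges over the integers between two consecutive elements of $\mathcal{C}$, and on identifying the MDP time-average reward with the ergodic reward of the induced unichain. Your write-up is a bit more explicit than the paper's in isolating and verifying the piecewise-constant claim via $q_{th}=\max\{i\in\mathcal{C}:i\le Q\}$, but the underlying argument is the same.
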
\begin{proof}
Please see Appendix E.
\end{proof}

By Lemma~\ref{lemma:relationship12}, instead of solving Problem~\ref{problem:originalproblem}, which is a complex stochastic optimization problem, we can solve Problem~\ref{problem:equivalentproblem}, which is a static problem over the smaller feasible set $\mathcal{C}\subseteq \mathcal Q$.

\subsection{Algorithm for Problem~\ref{problem:equivalentproblem}}
Problem~\ref{problem:equivalentproblem} is a discrete optimization problem over the feasible set $\mathcal{C}$. It can be solved in a brute-force way by computing $\bm\pi(q_{th})$ for each $q_{th}\in\mathcal{C}$ separately. The brute-force method has high complexity and fails to exploit the structure of the problem.
In this part, we develop a low-complexity algorithm to solve Problem~\ref{problem:equivalentproblem} by computing $\bm\pi(q_{th})$ for all $q_{th}\in\mathcal{C}$ iteratively based on the special structure of $\mathbf{P}(q_{th})$.

 We sort the elements of $\mathcal{C}$ in ascending order, i.e., $c_1,c_2,\cdots,c_{|\mathcal{C}|}$, where $c_k$ denotes the $k$-th smallest element in $\mathcal{C}$. For notation simplicity, we use $\mathbf{P}(k)$  and $\bm{\pi}(k)$ to represent $\mathbf{P}(q_{th})$ and $\bm{\pi}(q_{th})$, respectively, where $c_k=q_{th}$. In other words,  each variable in $\mathcal{C}$ is indexed by $k$.
 Denote
 \begin{equation}
 \mathbf{A}(k)\triangleq \mathbf{I}_{|\mathcal{C}|}-\mathbf{P}(k)^T.\label{eqn:Ak}
 \end{equation}
Note that the size of $\mathbf{A}(k)$ is $|\mathcal{C}|\times |\mathcal{C}|$.
The system of linear equations in \eqref{eqn:comp-pi} can be transformed to the following system of linear equations:
\begin{align}
\begin{cases}
  \mathbf{A}(k)\bm{\pi}(k)=0\\
  ||\bm{\pi}(k)||=1
\end{cases}.\label{eqn:comp-pi-k}
\end{align}
The steady-state probability vector $\bm{\pi}(k)$ in \eqref{eqn:comp-pi-k} can be obtained using the partition factorization method\cite{computepi} as follows.
By removing the $(k+1)$-th column and the $|\mathcal{C}|$-th row of $\mathbf{A}(k)$, we obtain a submatrix of $\mathbf{A}(k)$, denoted as $\hat{\mathbf{A}}(k)$. Note that the size of $\hat{\mathbf{A}}(k)$ is $(|\mathcal{C}|-1)\times(|\mathcal{C}|-1)$.
Accordingly, let $\mathbf{K}(k)$ denote the $|\mathcal{C}|\times |\mathcal{C}|$ permutation matrix such that
\begin{equation}
  \mathbf{A}(k)\mathbf{K}(k)^T=\begin{bmatrix}\hat{\mathbf{A}}(k) &\mathbf{y}(k)\\ \mathbf{z}(k)^T &\beta(k)\\
\end{bmatrix}.\label{eqn:permu}
\end{equation}
In addition, let $\mathbf{\hat{x}}(k)$ denote the solution to the following subsystem:
        \begin{equation}
          \hat{\mathbf{A}}(k)\mathbf{\hat{x}}(k)=-\mathbf{y}(k).\label{eqn:comp-hatx}
        \end{equation}
Then, based on $\mathbf{\hat{x}}(k)$, we can compute $\bm{\pi}(k)$ by the partition factorization method\cite{computepi} in Algorithm~\ref{alg:comp-pi}.

\begin{algorithm}[!h]
\caption{Algorithm to Compute $\bm{\pi}(k)$}
\label{alg:comp-pi}
\begin{algorithmic}[1]
\STATE Obtain $\mathbf{P}(k)$ and $\mathbf{A}(k)$ in \eqref{eqn:Ak}.
\STATE Find $\mathbf{K}(k)$ and partition $\mathbf{A}(k)$ into the form \eqref{eqn:permu} to obtain $\hat{\mathbf{A}}(k)$ and $\mathbf{y}(k)$.
\STATE Compute $\mathbf{\hat{x}}(k)$ using Gaussian elimination.        \label{code:fram:comp-hatxk}
\STATE Let $
  \mathbf{x}(k)\triangleq \mathbf{K}(k)\begin{bmatrix}\mathbf{\hat{x}}(k)\\ 1\\\end{bmatrix}
$ and  normalize $\mathbf{x}(k)$ to obtain $\bm{\pi}(k)$, i.e.,
$
          \bm{\pi}(k)=\frac{\mathbf{x}(k)}{||\mathbf{x}(k)||}.
$
\end{algorithmic}
\end{algorithm}

\begin{remark}[Computational Complexity of Gaussian elimination]

The computation of each $\mathbf{\hat{x}}(k)$ using Gaussian elimination in step \ref{code:fram:comp-hatxk} of Algorithm~\ref{alg:comp-pi} requires $2(|\mathcal{C}|-1)^3/3$ flops.\footnote{The computational complexity is measured as the number of floating-point operations (flops), where a \emph{flop} is defined as one addition, subtraction, multiplication or division of two floating-point numbers\cite{matrix}.} Thus, the computation of  $\{\mathbf{\hat{x}}(k):k=1,2,\cdots,|\mathcal{C}|\}$ using Gaussian elimination requires $2|\mathcal{C}|(|\mathcal{C}|-1)^3/3$ flops, i.e., is of complexity $O(|\mathcal{C}|^4)$.
\label{remark:remarkofComxBrute}
\end{remark}
\begin{figure}[h]
\begin{centering}
\includegraphics[scale=.35]{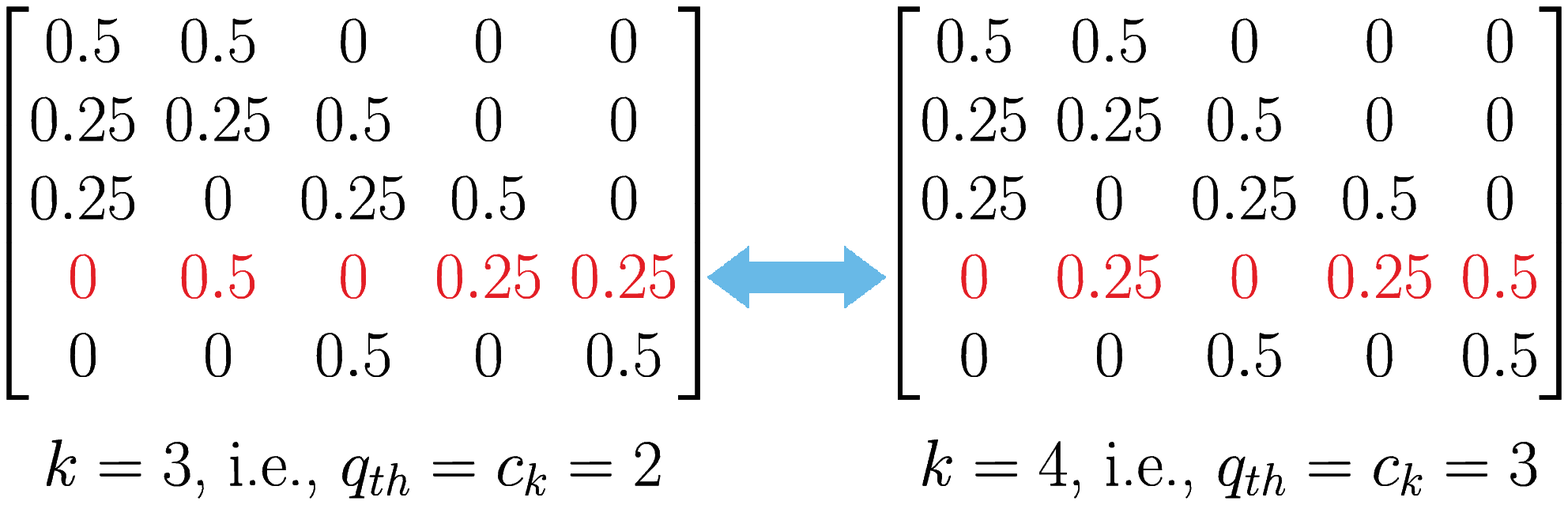}
 \caption{\small{Illustration of $\mathbf{P}(k)$. $R_s=1$, $R_r=2$, $N_r=4$, $p_s=p_r=0.5$. $\mathcal{C}$=$\{0,1,2,3,4\}$.}}\label{fig:illustration}
\end{centering}
\end{figure}

On the other hand, for each $k=1,2,\cdots,|\mathcal{C}|$, $\mathbf{\hat{x}}(k)$ can also be obtained by multiplying both sides of \eqref{eqn:comp-hatx} with $\hat{\mathbf{A}}(k)^{-1}$.\footnote{$\hat{\mathbf{A}}(k)^{-1}$ exists because $\hat{\mathbf{A}}(k)$ is a nonsingular matrix\cite{computepi}.}
This involves matrix inversion. To reduce the complexity, instead of computing $\hat{\mathbf{A}}(k)^{-1}$ for each $k$ separately, we shall compute $\hat{\mathbf{A}}(k)^{-1}$ iteratively (i.e., compute $\hat{\mathbf{A}}(k+1)^{-1}$ based on $\hat{\mathbf{A}}(k)^{-1}$) by exploiting the relationship between $\mathbf{P}(k)$ and $\mathbf{P}(k+1)$.
Specifically, for two adjacent thresholds $c_k$ and $c_{k+1}$, the corresponding transition probability matrices $\mathbf{P}(k)$ and $\mathbf{P}(k+1)$ differ only in the $(k+1)$-th row, as illustrated in Fig.~\ref{fig:illustration}.
The following lemma summarizes the relationship between  $\hat{\mathbf{A}}(k+1)^{-1}$ and $\hat{\mathbf{A}}(k)^{-1}$, which directly results from the special structure of $\mathbf{P}(k)$.
\begin{lemma}[Relationship between $\hat{\mathbf{A}}(k+1)^{-1}$ and $\hat{\mathbf{A}}(k)^{-1}$]
 Let $\hat{\mathbf{K}}(k)$ denote the $(|\mathcal{C}|-1)\times(|\mathcal{C}|-1)$ permutation matrix obtained by exchanging the $(k+1)$-th and $(k+2)$-th columns of $\mathbf{I}_{|\mathcal{C}|-1}$
and let $\mathbf{a}_{k+1}(k+1)$ and $\mathbf{a}_{k+2}(k)$ denote the $(k+1)$-column of $\hat{\mathbf{A}}(k+1)$ and the $(k+2)$-column of $\hat{\mathbf{A}}(k)$, respectively.
Then, $\hat{\mathbf{A}}(k+1)^{-1}$ and $\hat{\mathbf{A}}(k)^{-1}$ satisfy:
\begin{align}
  \hat{\mathbf{A}}(k+1)^{-1}&=\hat{\mathbf{K}}(k)\hat{\mathbf{A}}(k)^{-1}\nonumber\\
  &-\frac{\hat{\mathbf{K}}(k)\hat{\mathbf{A}}(k)^{-1}\mathbf{u}(k)\mathbf{v}(k)^T\hat{\mathbf{K}}(k)\hat{\mathbf{A}}(k)^{-1}}{1+\mathbf{v}(k)^T\hat{\mathbf{K}}(k)\hat{\mathbf{A}}(k)^{-1}\mathbf{u}(k)},\label{eqn:comp-nextk}
\end{align}

where
\begin{align}
    &\mathbf{u}(k)\triangleq \mathbf{a}_{k+1}(k+1)-\mathbf{a}_{k+2}(k),\label{eqn:defiuk}\\
    &\mathbf{v}(k)\triangleq \mathbf{e}_{k+1,|\mathcal{C}|-1}.\label{eqn:defivk}
\end{align}
\label{lemma:lemmaofupdate}
\end{lemma}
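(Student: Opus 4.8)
The plan is to derive \eqref{eqn:comp-nextk} from the Sherman--Morrison rank-one update identity, after a short bookkeeping step that identifies the right base matrix and the right rank-one perturbation. The starting observation is the one already recorded before the lemma (and visible in Fig.~\ref{fig:illustration}): raising the threshold from $c_k$ to $c_{k+1}$ alters the transition law in \eqref{eqn:thresholdlemma2} only for the single state $c_{k+1}$, since every other state keeps its position relative to the threshold. Hence $\mathbf{P}(k)$ and $\mathbf{P}(k+1)$ agree except in row $k+1$, so $\mathbf{A}(k)=\mathbf{I}_{|\mathcal{C}|}-\mathbf{P}(k)^{T}$ and $\mathbf{A}(k+1)$ agree except in column $k+1$; I would record this as $\mathbf{A}(k+1)=\mathbf{A}(k)+\mathbf{w}(k)\mathbf{e}_{k+1,|\mathcal{C}|}^{T}$ for some vector $\mathbf{w}(k)$.

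The crux is to push this single-column change through to the deflated submatrices. Recall that $\hat{\mathbf{A}}(k)$ is the leading $(|\mathcal{C}|-1)\times(|\mathcal{C}|-1)$ block of $\mathbf{A}(k)\mathbf{K}(k)^{T}$, where $\mathbf{K}(k)$ in \eqref{eqn:permu} interchanges columns $k+1$ and $|\mathcal{C}|$, and likewise $\hat{\mathbf{A}}(k+1)$ comes from $\mathbf{A}(k+1)$ by interchanging columns $k+2$ and $|\mathcal{C}|$. Writing the columns of $\hat{\mathbf{A}}(k)$ and of $\hat{\mathbf{A}}(k+1)$ explicitly as columns of $\mathbf{A}(k)$ and $\mathbf{A}(k+1)$, and using the single-column change above, one checks column by column that the two matrices agree except that (i) the columns sitting in positions $k+1$ and $k+2$ are interchanged --- which is exactly post-multiplication by $\hat{\mathbf{K}}(k)$ --- and (ii) the column in position $k+1$ is replaced; this replacement changes $\hat{\mathbf{A}}(k)\hat{\mathbf{K}}(k)$ only in its $(k+1)$-th column, adding the vector $\mathbf{a}_{k+1}(k+1)-\mathbf{a}_{k+2}(k)=\mathbf{u}(k)$, i.e.\ adding $\mathbf{u}(k)\mathbf{v}(k)^{T}$ with $\mathbf{v}(k)=\mathbf{e}_{k+1,|\mathcal{C}|-1}$. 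Thus
\[
\hat{\mathbf{A}}(k+1)=\hat{\mathbf{A}}(k)\hat{\mathbf{K}}(k)+\mathbf{u}(k)\mathbf{v}(k)^{T}.
\]
I expect this index-chasing --- tracking which column of a matrix that has itself changed in one column ends up in which position of each deflated submatrix, and recognizing the discrepancy as precisely one adjacent transposition plus one rank-one correction --- to be the only step that needs care; the rest is routine.

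It then remains to invert. Since $\hat{\mathbf{K}}(k)$ is a transposition matrix it is an involution, so $\hat{\mathbf{K}}(k)^{-1}=\hat{\mathbf{K}}(k)^{T}=\hat{\mathbf{K}}(k)$ and $(\hat{\mathbf{A}}(k)\hat{\mathbf{K}}(k))^{-1}=\hat{\mathbf{K}}(k)\hat{\mathbf{A}}(k)^{-1}$, which exists because $\hat{\mathbf{A}}(k)$ is nonsingular. Because both $\hat{\mathbf{A}}(k)\hat{\mathbf{K}}(k)$ and $\hat{\mathbf{A}}(k+1)$ are nonsingular, the scalar $1+\mathbf{v}(k)^{T}\hat{\mathbf{K}}(k)\hat{\mathbf{A}}(k)^{-1}\mathbf{u}(k)$ is nonzero, so the Sherman--Morrison formula applies to the displayed identity with base matrix $\hat{\mathbf{A}}(k)\hat{\mathbf{K}}(k)$ and rank-one perturbation $\mathbf{u}(k)\mathbf{v}(k)^{T}$, and it yields exactly \eqref{eqn:comp-nextk}.
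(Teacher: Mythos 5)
Your proposal is correct and follows essentially the same route as the paper's Appendix~F: identify that $\mathbf{P}(k)$ and $\mathbf{P}(k+1)$ differ only in row $k+1$, hence $\mathbf{A}(k)$ and $\mathbf{A}(k+1)$ differ only in column $k+1$; chase the indices through the deflation to obtain $\hat{\mathbf{A}}(k+1)=\hat{\mathbf{A}}(k)\hat{\mathbf{K}}(k)+\mathbf{u}(k)\mathbf{v}(k)^{T}$ (the paper introduces the intermediate name $\tilde{\mathbf{A}}(k)=\hat{\mathbf{A}}(k)\hat{\mathbf{K}}(k)$ but the content is identical); and then apply the Sherman--Morrison formula together with the involution property of $\hat{\mathbf{K}}(k)$. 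You even add a small point the paper leaves implicit, namely the justification that the Sherman--Morrison denominator is nonzero because both $\hat{\mathbf{A}}(k)\hat{\mathbf{K}}(k)$ and $\hat{\mathbf{A}}(k+1)$ are nonsingular.

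One remark: you explicitly spell out that $\mathbf{K}(k)^{T}$ is the transposition swapping columns $k+1$ and $|\mathcal{C}|$ (rather than a cyclic shift moving column $k+1$ to the end). This is indeed the interpretation under which the lemma's update formula is correct --- under the cyclic-shift reading, $\hat{\mathbf{A}}(k)$ and $\hat{\mathbf{A}}(k+1)$ would already differ in only one column and the extra $\hat{\mathbf{K}}(k)$ would misalign the columns beyond position $k+1$. The paper's phrase ``removing the $(k+1)$-th column'' is a little ambiguous on this point, so making the convention explicit is a useful clarification rather than a deviation.
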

\begin{proof}
Please see Appendix F.
\end{proof}

Based on Lemma~\ref{lemma:lemmaofupdate}, we can compute $\mathbf{\hat{x}}(k)$ by Algorithm~\ref{alg:comp-hatx}.
\begin{algorithm}[!h]
\caption{Algorithm to Compute $\mathbf{\hat{x}}(k)$}
\label{alg:comp-hatx}
\begin{algorithmic}[1]
\IF{$k=1$}
\STATE Compute $\hat{\mathbf{A}}(1)^{-1}$ using Gaussian elimination.
\ELSE
\STATE Obtain $\hat{\mathbf{K}}(k-1)$, $\mathbf{u}(k-1)$ and $\mathbf{v}(k-1)$ in Lemma ~\ref{lemma:lemmaofupdate}.
\label{code:fram:const-kuv}
\STATE Compute $\hat{\mathbf{A}}(k)^{-1}$ based on $\hat{\mathbf{A}}(k-1)^{-1}$ according to \eqref{eqn:comp-nextk}.
\label{code:fram:comp-nextk}
\ENDIF
\STATE Compute $\mathbf{\hat{x}}(k)=-\hat{\mathbf{A}}(k)^{-1}\mathbf{y}(k)$.
\label{code:fram:comp-hatxk-2}
\end{algorithmic}
\end{algorithm}
\begin{remark}[Computational Complexity of Algorithm~\ref{alg:comp-hatx}]
  By Algorithm~\ref{alg:comp-hatx}, for $k=1$, the computation of $\mathbf{\hat{x}}(k)$ requires $8(|\mathcal{C}|-1)^3/3+2(|\mathcal{C}|-1)^2$ flops. For each $k=2,3,\cdots,|\mathcal{C}|$, steps \ref{code:fram:const-kuv}, \ref{code:fram:comp-nextk} and \ref{code:fram:comp-hatxk-2} require  $|\mathcal{C}|-1$, $10(|\mathcal{C}|-1)^2$ and $2(|\mathcal{C}|-1)^2$ flops, respectively, and hence the computation of $\mathbf{\hat{x}}(k)$ requires $12(|\mathcal{C}|-1)^2+|\mathcal{C}|-1$ flops.
Therefore, the computation of  $\{\mathbf{\hat{x}}(k):k=1,2,\cdots,|\mathcal{C}|\}$ using Algorithm~\ref{alg:comp-hatx} requires $44(|\mathcal{C}|-1)^3/3+3(|\mathcal{C}|-1)^2$ flops, i.e., is of complexity $O(|\mathcal{C}|^3)$.
\label{remark:remarkofComxUpdate}
\end{remark}

By comparing Remarks~\ref{remark:remarkofComxBrute} and~\ref{remark:remarkofComxUpdate}, we can see that, the complexity of computing $\{\mathbf{\hat{x}}(k):k=1,2,\cdots,|\mathcal{C}|\}$ using Algorithm~\ref{alg:comp-hatx} $\left(O(|\mathcal{C}|^3)\right)$ is lower than that using Gaussian elimination in step \ref{code:fram:comp-hatxk} of Algorithm~\ref{alg:comp-pi} $\left(O(|\mathcal{C}|^4)\right)$. This is because using Gaussian elimination in step \ref{code:fram:comp-hatxk} of Algorithm~\ref{alg:comp-pi} cannot make use of the special structure of $\mathbf{P}(k)$, and hence has higher computational complexity.

By replacing step \ref{code:fram:comp-hatxk} in Algorithm~\ref{alg:comp-pi} with Algorithm~\ref{alg:comp-hatx}, we can compute $\bm{\pi}(k)$ for all $k$ iteratively. Therefore, we can develop Algorithm~\ref{alg:total} to solve Problem~\ref{problem:equivalentproblem}.
\begin{algorithm}[h]
\caption{Algorithm to Compute $q_{th}^*$ for Problem~\ref{problem:equivalentproblem}}
\label{alg:total}
\begin{algorithmic}[1]
\STATE \textbf{initialize} $q^*_{th}=0$, $temp=0$.
\FOR{$k=1:|\mathcal{C}|$}
\STATE $q_{th}\leftarrow c_k$.
\STATE Compute $\mathbf{r}(q_{th})$ by \eqref{eqn:comp-ri}.
\STATE Compute $\bm{\pi}(q_{th})$ by Algorithm~\ref{alg:comp-pi} wherein step \eqref{code:fram:comp-hatxk} is replaced with Algorithm~\ref{alg:comp-hatx}.
\label{code:fram:comp-pi-qth}
\IF{$\bm{\pi}(q_{th})\mathbf{r}(q_{th})\geq temp$} \STATE $temp\leftarrow\bm{\pi}(q_{th})\mathbf{r}(q_{th})$, $q^*_{th}\leftarrow q_{th}$. \ENDIF
\ENDFOR
\end{algorithmic}
\end{algorithm}

\section{Optimal Solution for Special Case}
In this section, we first obtain the corresponding static optimization problem for the symmetric case ($R_s=R_r=R, N_r=nR$ and $p_s=p_r=p$). Then, we derive its closed-form optimal solution.

By Lemma 2, the recurrent class of $\{Q_t\}$ is given by $\mathcal{C}=\{0, R, 2R,\cdots,nR\}$.  Fig.~\ref{fig:specialcase} illustrates the corresponding transition diagram. By applying the Perron-Frobenius theorem and the detailed balance equations\cite{gallager}, we obtain the steady-state probability:
\begin{subequations}
\begin{align}
&\pi_{0}(m)=\frac{\bar{p}^{2m+2}-\bar{p}^{2m+1}}{\bar{p}^{2m+2}+\bar{p}^{n+1}-2\bar{p}^{m+1}},\\
&\pi_{i+1}(m)=
\begin{cases}\frac{1}{\bar{p}}\pi_{i}(m),&0\leq i\leq m-1;\\
\pi_{i}(m),&i=m;\\
\bar{p}\pi_{i}(m),&m+1\leq i\leq n-1.
\end{cases}
\end{align}\label{eqn:pi0pin}
\end{subequations}
where $\bar p\triangleq 1-p$.
Then, in the symmetric case, Problem~\ref{problem:equivalentproblem} is equivalent to the following optimization problem.
\begin{problem}[Optimization   for Symmetric Case]
  \begin{align}
\min_{m\in\{0,1,...,n\}} &~~~\frac{\bar{p}^{n+1}-\bar{p}^n+\bar{p}^{2m+2}-\bar{p}^{2m+1}}{\bar{p}^{2m+2}+\bar{p}^{n+1}-2\bar{p}^{m+1}}.\label{eqn:optimalm}
\end{align}
\label{problem:problemSymm}
\end{problem}
\begin{figure}[t]
\begin{centering}
\includegraphics[scale=.41]{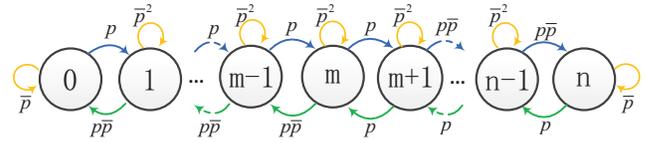}
 \caption{\small{The transition diagram of $\{Q_t\}$ for the symmetric case. State $i\in\{0,1,\cdots,n\}$ represents state $iR\in\mathcal{C}$ and $m=q_{th}/R$.}}\label{fig:specialcase}
\end{centering}
\end{figure}
By change of variables, we can equivalently transform the discrete optimization problem in  Problem~\ref{problem:problemSymm} to a continuous optimization problem and obtain the optimal threshold to Problem~\ref{problem:originalproblem}, which is  summarized in the following lemma.
\begin{lemma}[Optimal Threshold for Symmetric Case]
  In the symmetric case, any threshold
\begin{eqnarray}
  Q_{th}^*\in\begin{cases}\{\frac{n-1}{2}R,\frac{n-1}{2}R+1,\cdots,\frac{n+1}{2}R-1\},&\textrm n~\text{is odd}\\
\{\frac{nR}{2}-R,\frac{nR}{2}-R+1,\cdots,\frac{nR}{2}+R-1\},&\textrm n~\text{is even}
\end{cases}\nonumber
\end{eqnarray} achieves the optimal value to Problem~\ref{problem:originalproblem}.
\label{lemma:solution2Symm}
\end{lemma}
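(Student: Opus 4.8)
The plan is to perform the change of variables mentioned just before the lemma, solve the resulting one-dimensional continuous problem in closed form, and then lift the optimal threshold in $\mathcal{C}$ back to a range of optimal thresholds for Problem~\ref{problem:originalproblem} through Lemma~\ref{lemma:relationship12}. First I would set $x\triangleq\bar{p}^{m+1}$, which is a strictly decreasing bijection from $m\in\{0,1,\dots,n\}$ onto $\{\bar{p}^{n+1},\bar{p}^{n},\dots,\bar{p}\}\subset[\bar{p}^{n+1},\bar{p}]$, and under which the objective of Problem~\ref{problem:problemSymm} becomes
\[
g(x)=-\frac{p}{\bar{p}}\cdot\frac{\bar{p}^{n+1}+x^{2}}{\bar{p}^{n+1}+x^{2}-2x}.
\]
Two elementary observations make $g$ well behaved on $[\bar{p}^{n+1},\bar{p}]$: since $x(2-x)>x\ge\bar{p}^{n+1}$ whenever $\bar{p}^{n+1}\le x<1$, the denominator is strictly negative there, so $g>0$; and $g$ satisfies the reflection identity $g(x)=g(\bar{p}^{n+1}/x)$, obtained by clearing fractions after substituting $\bar{p}^{n+1}/x$, which will account for the tie in the even-$n$ case.

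The crux is the monotonicity of $g$. By the quotient rule, the relevant numerator $N'D-ND'$ (with $N=\bar{p}^{n+1}+x^{2}$, $D=\bar{p}^{n+1}+x^{2}-2x$) collapses, after the cross terms cancel, to $2(\bar{p}^{n+1}-x^{2})$, so $g'(x)$ has the sign of $x^{2}-\bar{p}^{n+1}$. Hence $g$ is strictly decreasing on $(0,\bar{p}^{(n+1)/2})$ and strictly increasing on $(\bar{p}^{(n+1)/2},\infty)$, so on $[\bar{p}^{n+1},\bar{p}]$ it attains its minimum only at $x^{\star}=\bar{p}^{(n+1)/2}$, corresponding to the (generally fractional) index $\tfrac{n-1}{2}$. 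The one genuinely computational ingredient is this derivative simplification — showing that an apparently messy rational derivative has a transparent sign — but it is routine once the substitution is set up; I expect the only place needing real care to be the bookkeeping in the final step.

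Finally I would descend from the continuous optimum to the integer grid and then to $\mathcal{Q}$. If $n$ is odd, $\tfrac{n-1}{2}$ is itself an integer in $\{0,\dots,n\}$, hence by strict unimodality the unique integer minimizer, so the optimal threshold in $\mathcal{C}$ is $q_{th}^{*}=\tfrac{n-1}{2}R$; Lemma~\ref{lemma:relationship12}, with $q_{th,next}^{*}=q_{th}^{*}+R$ since $\mathcal{C}$ is arithmetic with step $R$, then yields precisely $Q_{th}^{*}\in\{\tfrac{n-1}{2}R,\dots,\tfrac{n+1}{2}R-1\}$. If $n$ is even, $\tfrac{n-1}{2}$ lies strictly between the integers $\tfrac{n}{2}-1$ and $\tfrac{n}{2}$, so strict unimodality leaves only these two as candidate minimizers; since their $x$-values $\bar{p}^{n/2}$ and $\bar{p}^{n/2+1}$ multiply to $\bar{p}^{n+1}$, the reflection identity forces $g$ to take the same minimal value at both, and no other integer beats them, so both $(\tfrac{n}{2}-1)R$ and $\tfrac{n}{2}R$ are optimal in $\mathcal{C}$. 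Applying Lemma~\ref{lemma:relationship12} to each and taking the union of the two ranges gives $Q_{th}^{*}\in\{\tfrac{nR}{2}-R,\dots,\tfrac{nR}{2}+R-1\}$, the claimed set. Throughout, one notes $q_{th}^{*}<N_r$ for $n\ge1$, so only the second branch of Lemma~\ref{lemma:relationship12} is needed.
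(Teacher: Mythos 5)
Your proof is correct and follows essentially the same route as the paper's: change of variables to a one-dimensional continuous relaxation of Problem~\ref{problem:problemSymm}, a derivative computation showing strict unimodality with the stationary point at the mid-exponent, and a lift back to $\mathcal{Q}$ through Lemma~\ref{lemma:relationship12}. Your slightly cleaner substitution $x=\bar p^{\,m+1}$ and the reflection identity $g(x)=g(\bar p^{\,n+1}/x)$ give a genuine derivation of the even-$n$ tie between $m=n/2-1$ and $m=n/2$, which the paper only asserts.
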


\begin{proof}
  Please see Appendix G.
\end{proof}
\section{Numerical Results and Discussion}
In this section, we verify  the analytical results and evaluate the performance of the proposed optimal solution via numerical examples.  In the simulations, we choose $p_s=p_r=0.5$.

\subsection{Threshold Structure of Optimal Policy}
Fig.~\ref{fig:1:V} illustrates the value function  $V(Q)$ versus $Q$.
$V(Q)$ is  computed numerically using RVIA \cite{bertsekas}.
 It can be seen that $V(Q)$ is increasing with $Q$ and $V(Q+1)-V(Q)\leq 1$, which verify Properties 1) and 2) in Lemma~\ref{lemma:propertiesofV}, respectively.
The third property of Lemma~\ref{lemma:propertiesofV} can also be verified by checking the simulation points.
Fig.~\ref{fig:1:deltaJ} illustrates the function $\Delta J(Q)\triangleq J(Q,1)-J(Q,0)$ versus $Q$. Note that, $\Delta J(Q)$ is a function of $V(Q)$, which is computed numerically using RVIA (a standard numerical MDP technique). According to the Bellman equation in Theorem~\ref{theorem:theorem2}, $\Delta J(Q)\geq 0$ indicates that it is optimal to schedule the R-D link for state $Q$; $\Delta J(Q)<0$ indicates that it is optimal to schedule the S-R link for state $Q$. Hence, from Fig.~\ref{fig:1:deltaJ}, we know that the optimal policy (obtained using RVIA) has a threshold-based structure and $Q^*_{th}=3, 7, 12$ are the optimal thresholds for the three cases. We have also calculated the optimal threshold for the three cases, using Algorithm~\ref{alg:total} for the general case ($R_s=1, R_r=2$ and $R_s=2, R_r=1$) and Lemma~\ref{lemma:solution2Symm} for the symmetric case ($R_s=1, R_r=1$). The obtained thresholds are equal to the optimal values obtained by the numerical MDP technique.
\begin{figure}[h]
\begin{minipage}[t]{.5\linewidth}
\centering
        \includegraphics[height=4.1cm, width=4.4cm]{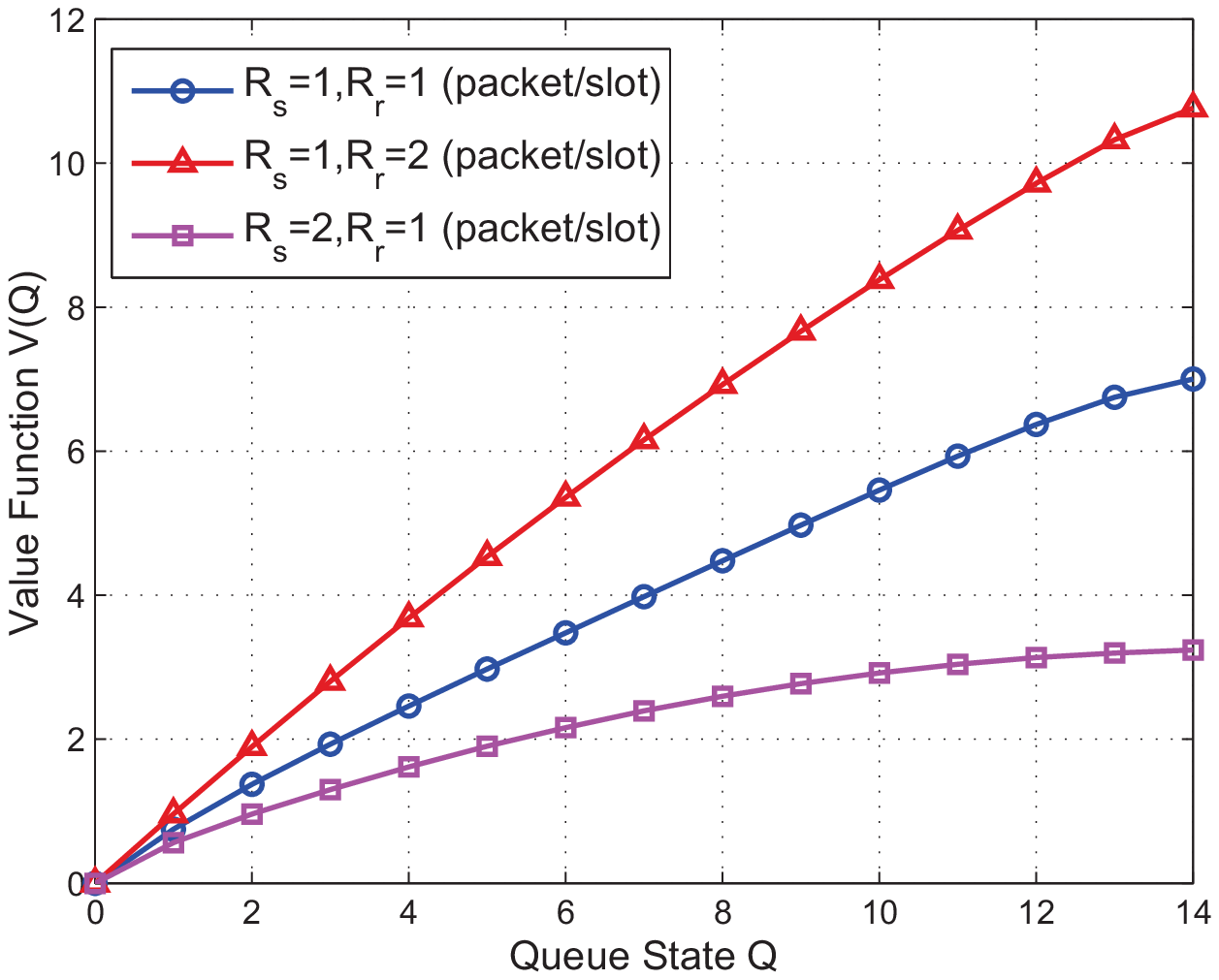}
\subcaption{\small{$V(Q)$}\label{fig:1:V}}
\end{minipage}%
\begin{minipage}[t]{.5\linewidth}
\centering
\includegraphics[height=4.1cm, width=4.4cm]{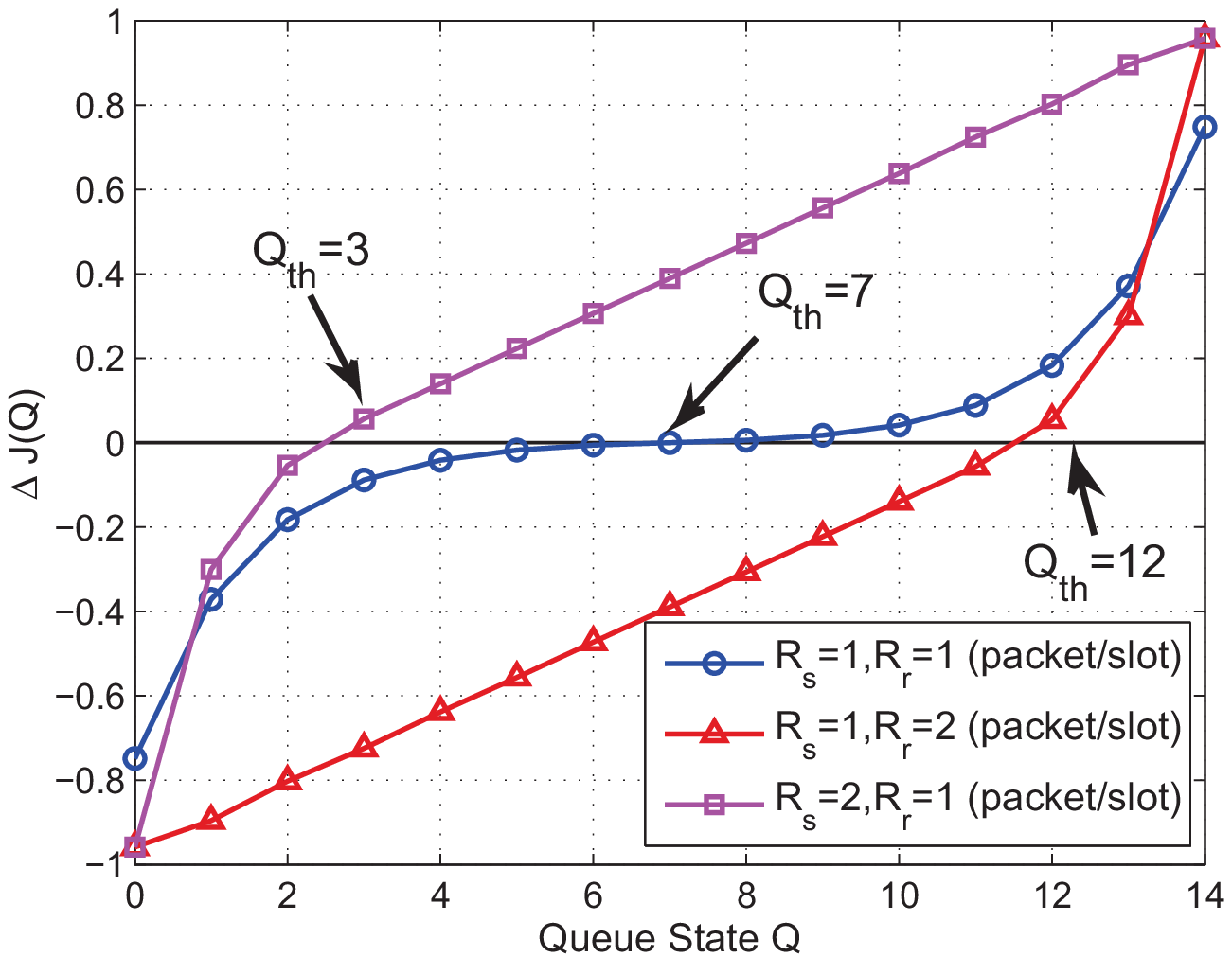}
\subcaption{\small{$\Delta J(Q)$}}\label{fig:1:deltaJ}
\end{minipage}
\caption{\small{Verification of analytical results. $N_r$$=$$14$ packets.}}\label{fig:1}
\end{figure}

\subsection{Throughput Performance}
We compare the throughput performance of the proposed optimal policy (given in Theorems \ref{theorem:theorem1} and \ref{theorem:theorem2})  with five baseline schemes: DOPN, ADOP, TOP, OLSP and NOP.\footnote{The detailed illustrations of DOPN, ADOP, TOP and OLSP are given in Section I.} In particular, DOPN refers to the Delay-Optimal Policy for Non-fading channels in \cite{ISIT11Cui}, and ADOP refers to the Asymptotically Delay-Optimal Policy for on/off fading channels in \cite{TIT15Cui}, both of which are designed for two-hop networks with infinite buffers at the source and relay. TOP refers to the Throughput-Optimal Policy for a multi-hop network with infinite source buffers and finite relay buffers in \cite{Long}. OLSP refers to the Optimal Link Selection Policy for a two-hop system with an infinite relay buffer in \cite[Theorem 2]{ztit}. NOP refers to the Near-Optimal Policy obtained based on approximate value iteration using \emph{aggregation}\cite[Chapter 6.3]{bertsekas}, which is similar to the approximate MDP technique used in \cite{wangtsp} and\cite{wangtit}. Note that, OLSP depends on the CSI only, while the other four baseline schemes depend on both of the CSI and QSI.
In addition, the threshold in DOPN ($Q_{th}$$=$$0$) is fixed; the threshold in ADOP ($Q_{th}$$=$$R_r$) depends on $R_r$; the threshold in TOP ($Q_{th}$$=$$N_r/2$) depends on $N_r$; NOP adapts to $R_s,R_r$ and $N_r$.

Fig.~\ref{fig:fixedbuffer2} and Fig.~\ref{fig:fixedrate2} illustrate the average system throughput versus the maximum transmission rate and  the relay buffer size, respectively, in the asymmetric case ($R_s$$\neq$$R_r$).
Since DOPN, ADOP, TOP, NOP and the proposed optimal policy depend on both of the CSI and QSI, they can achieve better throughput performance than OLSP in most cases.
Moreover, as the threshold in the proposed optimal policy also depends on $R_s,R_r$ and $N_r$, it outperforms all the baseline schemes.
In summary, the proposed optimal policy can make better use of the system information and system parameters, and hence achieves the optimal throughput.
Specifically, the performance gains of the proposed policy over DOPN, ADOP, TOP, OLSP and NOP are up to $15\%$, $10\%$, $80\%$, $17\%$ and $8\%$, respectively.
Besides, the performance of TOP relies heavily on the choice for the parameter $R_{max}$ (the maximum admitted rate), which is not specified in \cite{Long}.
\begin{figure}[h]
\begin{minipage}[t]{.5\linewidth}
\centering
        \includegraphics[height=4.1cm, width=4.4cm]{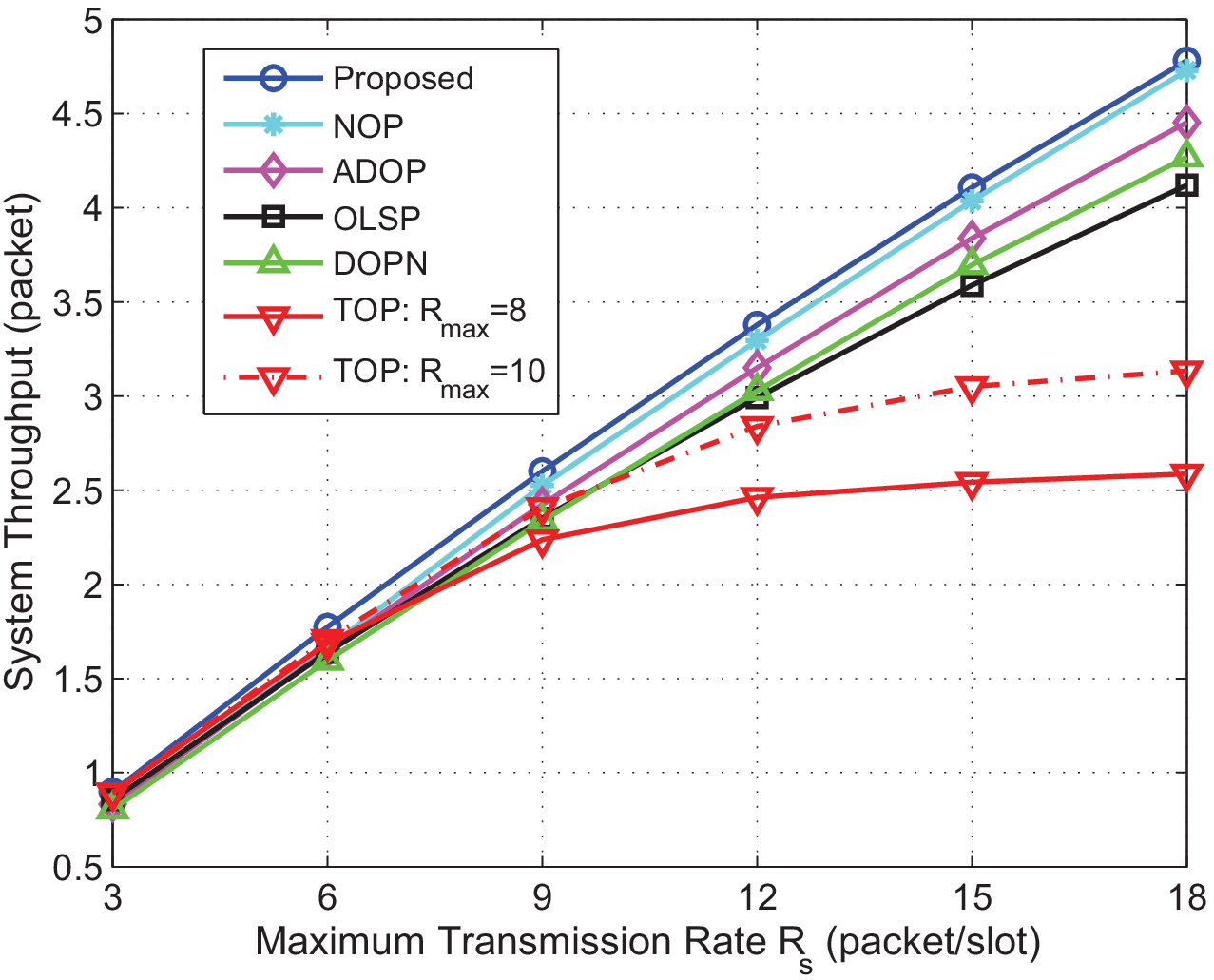}
\subcaption{\small{Throughput versus $R_s$.$~~~~~~~$ $~~~~~N_r=50$ packets.}}\label{fig:fixedbuffer2}
\end{minipage}%
\begin{minipage}[t]{.5\linewidth}
\centering
\includegraphics[height=4.1cm, width=4.4cm]{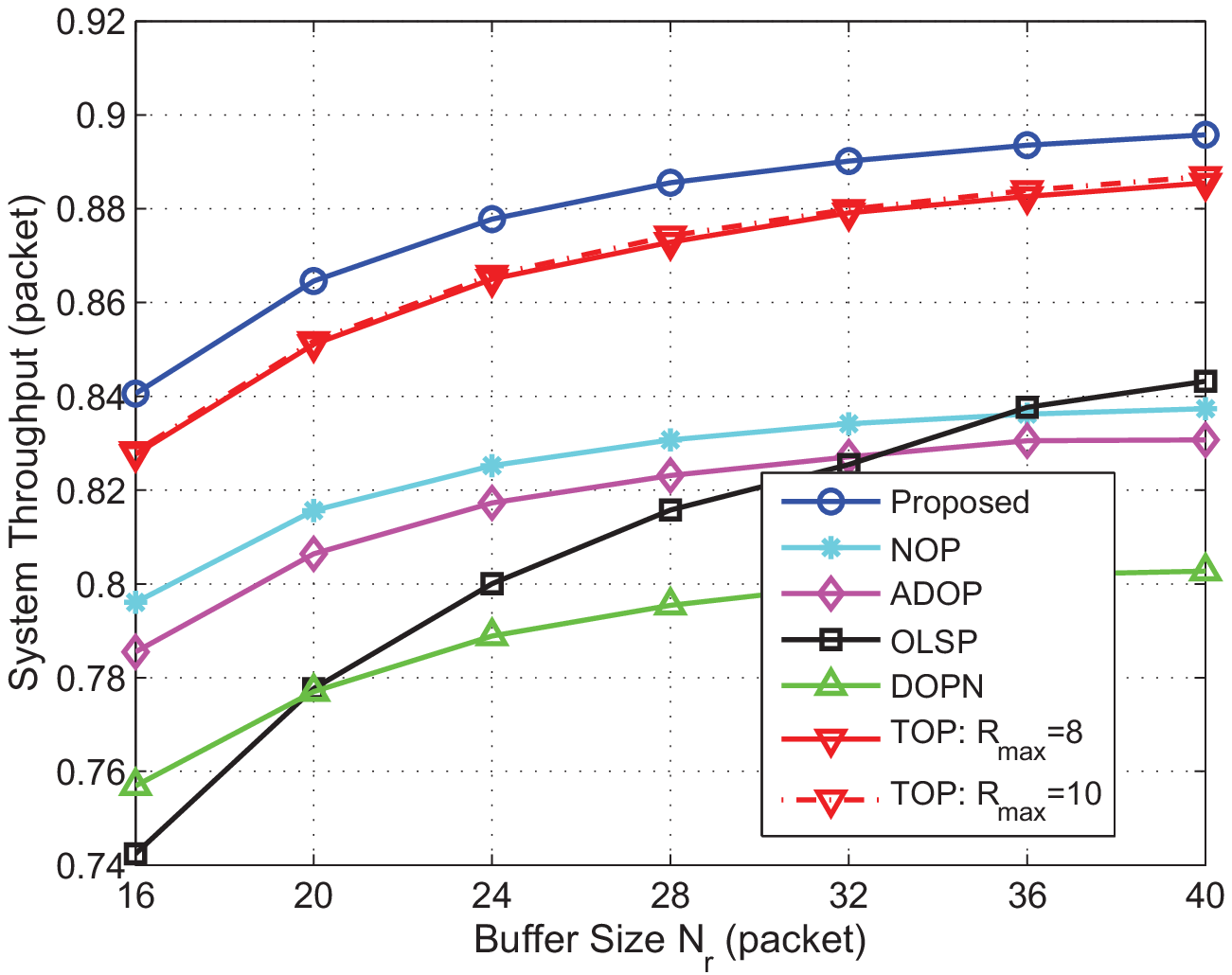}
\subcaption{\small{Throughput versus $N_r$.$~~~~~~~$ $~~~~~R_s=3$ packets/slot.}}\label{fig:fixedrate2}
\end{minipage}
\caption{\small{Throughput for different schemes in the asymmetric case ($R_s$$\neq $$R_r$). $R_s/R_r$$=$$3/2$. The unit of $R_{max}$ is packet/slot.}}\label{fig:fig2}
\end{figure}

Fig.~\ref{fig:fixedbuffer} and Fig.~\ref{fig:fixedrate}  illustrate the average system throughput versus the maximum transmission rate and  the relay buffer size, respectively, in the symmetric case ($R_s$$=$$R_r$). Similar observations can be made for the symmetric case. The proposed optimal policy outperforms all the baseline schemes and its performance gains over DOPN, ADOP, TOP, OLSP and NOP are up to $20\%$, $15\%$, $30\%$, $20\%$ and $13\%$, respectively.

\begin{figure}[t]
\begin{minipage}[t]{.5\linewidth}
\centering
        \includegraphics[height=4.1cm, width=4.4cm]{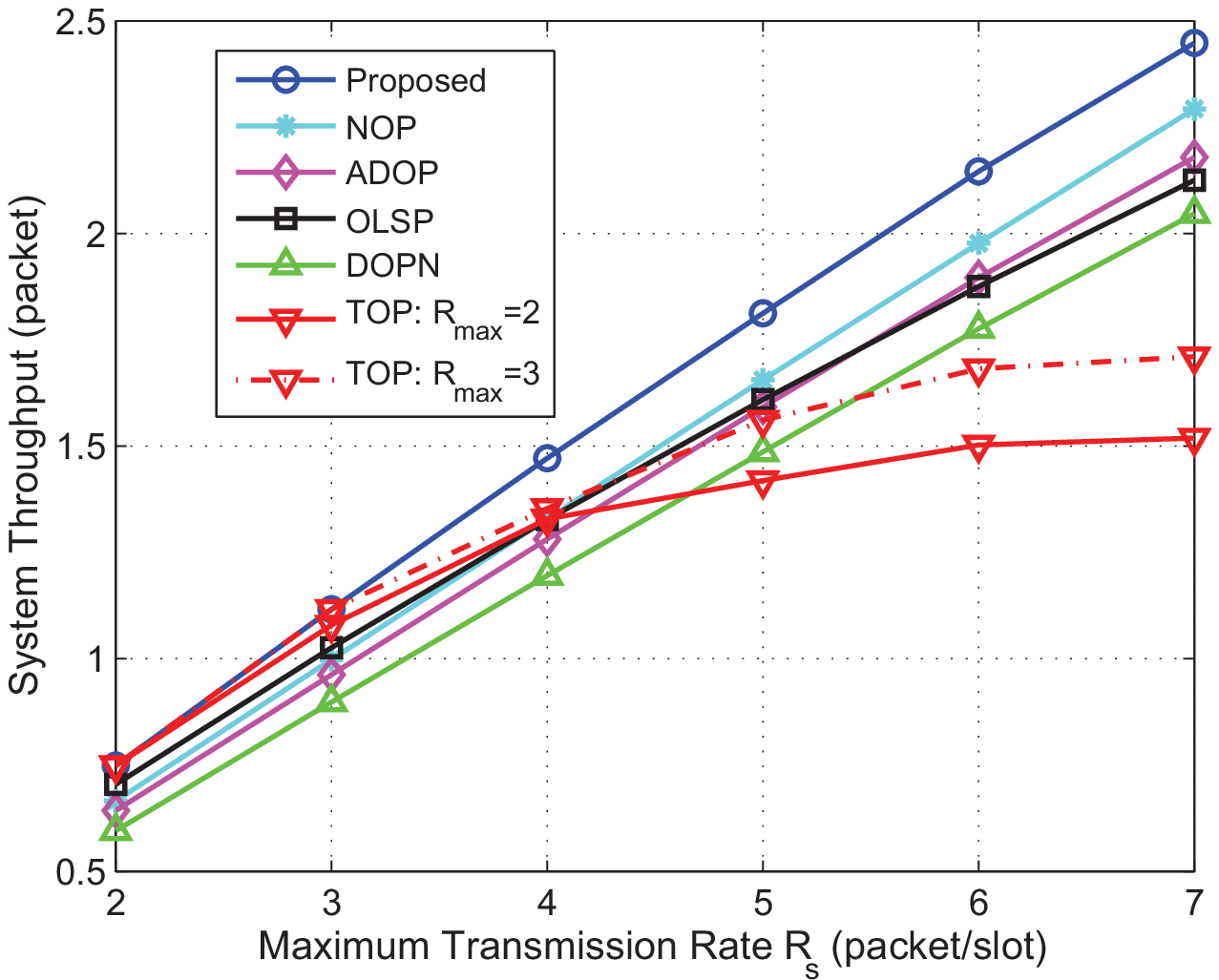}
        \centering
\subcaption{\small{Throughput versus $R_s$.$~~~~~~~~$ $~~~~~N_r=30$ packets.}}\label{fig:fixedbuffer}
\end{minipage}%
\begin{minipage}[t]{.5\linewidth}
\centering
\includegraphics[height=4.1cm, width=4.4cm]{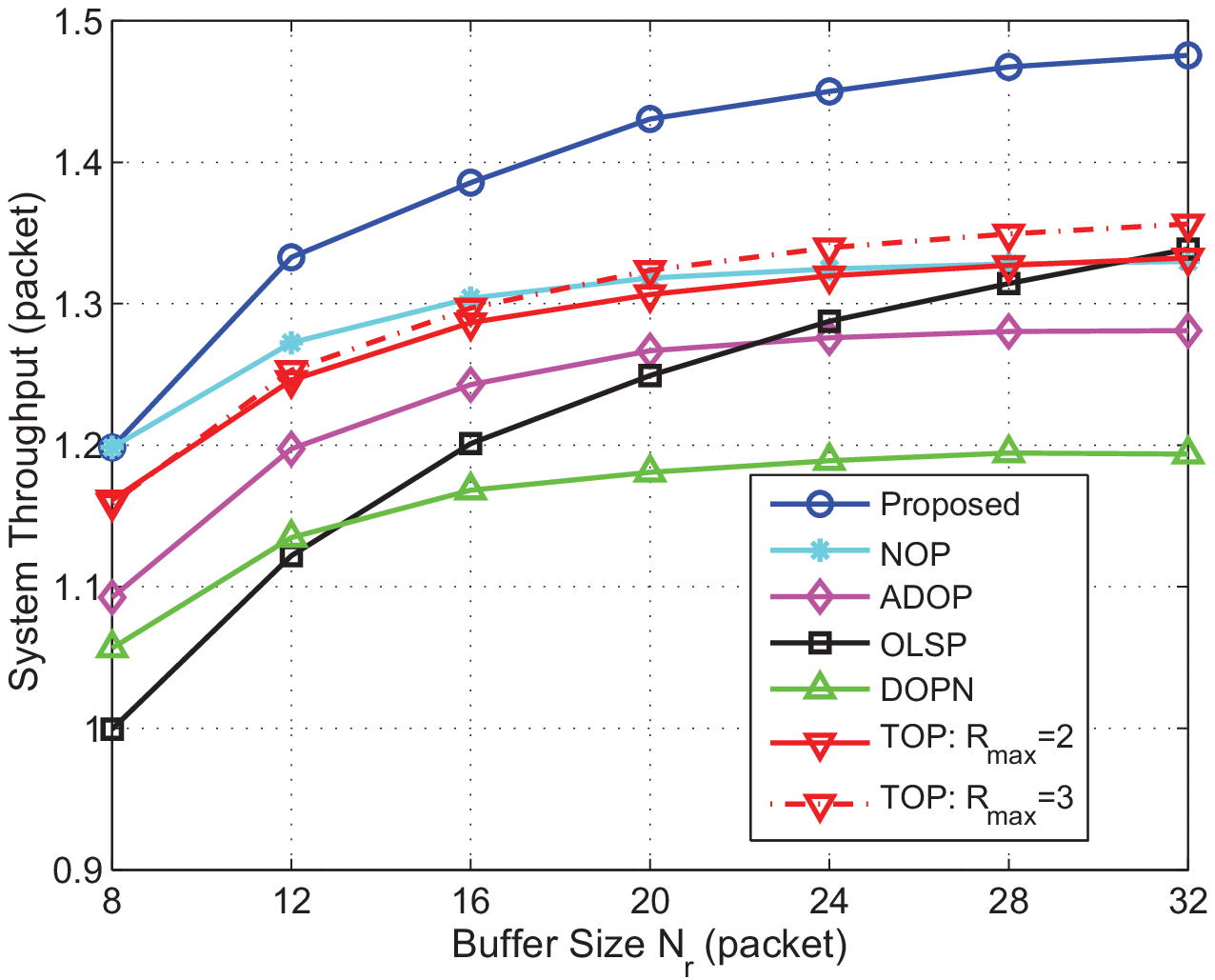}
\subcaption{\small{Throughput versus $N_r$.$~~~~~~~~$ $~~~~~R_s=4$ packets/slot.}}\label{fig:fixedrate}
\end{minipage}
\caption{\small{Throughput performance for different schemes in the symmetric case ($R_s$$=$$R_r$). The unit of $R_{max}$ is packet/slot.}}\label{fig:fig3}
\end{figure}

\subsection{Computational Complexity}

Table~\ref{tableasym} illustrates the average Matlab computation time of different algorithms in the asymmetric case ($R_s$$\neq$$R_r$).
It can be seen that, our proposed Algorithm~\ref{alg:total} achieves the lowest computational complexity.
Specifically, the standard numerical algorithms (i.e., policy iteration and relative value iteration) designed for the stochastic optimization problem (Problem~\ref{problem:originalproblem}) have much higher computational complexity than the algorithms (i.e., the brute-force algorithm and Algorithm~\ref{alg:total}) for the static optimization problem (Problem~\ref{problem:equivalentproblem}). In addition, for Problem~\ref{problem:equivalentproblem}, the complexity of the brute-force algorithm is higher than that of the proposed Algorithm~\ref{alg:total} and the complexity gap between them increases with $|\mathcal{C}|$ rapidly. This verifies the discussions in Remarks~\ref{remark:remarkofComxBrute} and \ref{remark:remarkofComxUpdate}.

\begin{table}[!htbp]
\footnotesize
\begin{adjustbox}{max width=0.49\textwidth}
\begin{tabular}{cccccc}
\toprule
\multirow{2}{*}{\tabincell{c}{Buffer Size\\ (packet)}}& \multicolumn{2}{c}{Stochastic Opt. (Problem~\ref{problem:originalproblem})} &&\multicolumn{2}{c}{Static Opt¡£ (Problem~\ref{problem:equivalentproblem})} \\\cmidrule{2-3}\cmidrule{5-6}
& PIA&RVIA   &&Brute-Force Alg.  &Alg. ~\ref{alg:total}\\ \midrule
\tabincell{c}{$N_r=40$\\ $(|\mathcal{C}|=21)$} & 0.4005&0.1754&& 0.0166 & 0.0154 \\
\midrule
\tabincell{c}{$N_r=60$\\ $(|\mathcal{C}|=31)$} & 0.7785&0.4336&& 0.0470 & 0.0443 \\
\midrule
\tabincell{c}{$N_r=80$\\ $(|\mathcal{C}|=41)$} & 0.8643&0.4993&& 0.0734 & 0.0635 \\
\midrule
\tabincell{c}{$N_r=100$\\ $(|\mathcal{C}|=51)$} & 1.0698&0.6253&& 0.1437 & 0.0988 \\
\bottomrule
\end{tabular}
\end{adjustbox}
\centering
\caption{\small{Average Matlab computation time (sec) comparison in the asymmetric case. $R_s=4$ packets/slot and $R_r=2$ packets/slot. Policy iteration algorithm (PIA) and RVIA are two standard numerical algorithms to solve the stochastic optimization problem (Problem~\ref{problem:originalproblem}) based on the equivalent Bellman equation in \eqref{eqn:reduceAction}. The brute-force algorithm and Algorithm~\ref{alg:total} are designed to solve the static optimization problem (Problem~\ref{problem:equivalentproblem}), as illustrated in Section V-B.}}
\label{tableasym}
\end{table}

Table~\ref{tablesym} illustrates the average Matlab computation time of different algorithms in the symmetric case ($R_s$$=$$R_r$).
It can be seen that, the numerical algorithms for Problem~\ref{problem:originalproblem} have much higher computational complexity than the proposed solution for Problem~\ref{problem:equivalentproblem}.
Note that, in the symmetric case, Problem~\ref{problem:equivalentproblem} has a closed-form solution, as shown in Lemma \ref{lemma:solution2Symm}. Thus, the computation time of the closed-form solution is negligible and does not change with $|\mathcal{C}|$.

\begin{table}[!htbp]
\footnotesize
\begin{adjustbox}{max width=0.49\textwidth}
\begin{tabular}{ccccc}
\toprule
 \multirow{2}{*}{\tabincell{c}{Buffer Size\\ (packet)}}& \multicolumn{2}{c}{Stochastic Opt. (Problem~\ref{problem:originalproblem})} &&Static Opt. (Problem~\ref{problem:equivalentproblem}) \\\cmidrule{2-3}\cmidrule{5-5}
&PIA & RVIA  && Closed-form in Lemma~\ref{lemma:solution2Symm}  \\ \midrule
\tabincell{c}{$N_r=40$\\ $(|\mathcal{C}|=21)$} &0.2560&0.1799&& 0.000001 \\
\midrule
\tabincell{c}{$N_r=60$\\ $(|\mathcal{C}|=31)$} &0.4951&0.2421&& 0.000001 \\
\midrule
\tabincell{c}{$N_r=80$\\ $(|\mathcal{C}|=41)$} &0.9612&0.4478&& 0.000001 \\
\midrule
\tabincell{c}{$N_r=100$\\ $(|\mathcal{C}|=51)$} &1.7854&0.8712&& 0.000001 \\
\bottomrule
\end{tabular}
\end{adjustbox}
\centering
\caption{\small{Average Matlab computation time (sec) comparison in the symmetric case. $R_s=R_r=2$ packets/slot.  The static optimization problem (Problem~\ref{problem:equivalentproblem}) has a closed-form solution, as shown in Lemma~\ref{lemma:solution2Symm}}}\label{tablesym}
\end{table}

\section{Conclusion}
In this paper, we consider the optimal  control to maximize the  average system throughput for a two-hop half-duplex relaying system with random channel connectivity and a finite relay buffer. We formulate the stochastic optimization problem as an infinite horizon average cost MDP. Then, we show that the optimal link selection policy has a threshold-based structure. Based on the structural properties of the optimal policy,  we simplify the MDP to a static discrete optimization problem and propose a low-complexity algorithm to obtain the optimal threshold. Furthermore,  we  obtain the closed-form optimal threshold for the symmetric case.

\appendices
\section*{Appendix A: Proof of Theorem~\ref{theorem:theorem1}}\label{app:theorem1}
First, using sample path arguments, we show that the link selection and transmission rate control policy $\Omega^*=(\alpha^*,\mu^*)$ is optimal, where $\alpha^*$ and $\mu^*$ satisfy the structures in \eqref{eqn:reducealpha} and \eqref{eqn:rate}, respectively.

Consider any stationary link selection and transmission rate control policy $\Omega=(\alpha,\mu)$ satisfying Definition~\ref{definition:definition1}.
Let $\{G_t\}$ be a given CSI sample path.
Denote $(a_{s,t},a_{r,t})$ and $(u_{s,t},u_{r,t})$  be the link selection and transmission rate control action at slot $t$ under $\Omega$, respectively.
Let $\{Q_t\}$ be the associated trajectory of QSI which evolves according to \eqref{eqn:queue-dyn-t} with $\{(a_{s,t},a_{r,t})\}$ and $\{(u_{s,t},u_{r,t})\}$.
Denote $(a_{s,t}^*,a_{r,t}^*)$ and $(u_{s,t}^*,u_{r,t}^*)$ be another link selection and transmission rate control action at slot $t$, respectively.
Let $\{Q_t^*\}$ be the associated trajectory of QSI which evolves according to \eqref{eqn:queue-dyn-t} with $\{(a_{s,t}^*,a_{r,t}^*)\}$ and $\{(u_{s,t}^*,u_{r,t}^*)\}$.
Assume $Q_1^*=Q_1$.
The relationship between $(a_{s,t}^*,a_{r,t}^*)$ and $(a_{s,t},a_{r,t})$ is given by:
\begin{equation}\label{eqn:relationship}
  (a_{s,t}^*,a_{r,t}^*)=\left\{
    \begin{array}{ll}
      (0,1)~\text{or}~(1,0), & \hbox{\parbox[t]{.15\textwidth}{if $\mathbf{G}_t$$=$$(1,1)~\text{and}$ $~~~~~~(a_{s,t},a_{r,t})$$=$$(0,0)$}} \\
      (a_{s,t},a_{r,t}), & \hbox{otherwise}
    \end{array}
  \right.
\end{equation}
$(u_{s,t}^*,u_{r,t}^*)$ satisfies the structure in \eqref{eqn:rate}, i.e., $u_{s,t}^*=\min\{R_s, N_r-Q_t^*\}$ and $u_{r,t}^*=\min\{R_r, Q_t^*\}$.
We shall show that the throughput under $\{(a_{s,t}^*,a_{r,t}^*)\}$ and $\{(u_{s,t}^*,u_{r,t}^*)\}$  is no smaller than that under $\{(a_{s,t},a_{r,t})\}$ and $\{(u_{s,t},u_{r,t})\}$ for a given CSI sample path $\{G_t\}$.
Define $\Delta_t\triangleq\sum_{\tau=1}^t\left(a_{r,t}^*u_{r,t}^*-a_{r,t}u_{r,t}\right)$.
It is equivalent to prove $\Delta_t\geq 0$ for all $t$.
In the following, using mathematical induction, we shall show that $\Delta_t\geq0$ and $\Delta_t+Q_{t+1}^*\geq Q_{t+1}$ hold for all $t$. (Note that $\Delta_t+Q_{t+1}^*\geq Q_{t+1}$ is needed to prove $\Delta_t\geq0$.)

Consider $t=1$.  We have $\Delta_1=a_{r,1}^*u_{r,1}^*-a_{r,1}u_{r,1}$ and $\Delta_1+Q_2^*- Q_2=a_{r,1}^*u_{r,1}^*-a_{r,1}u_{r,1}+Q_1^*+a_{s,1}^*u_{s,1}^*-a_{r,1}^*u_{r,1}^*-(Q_1+a_{s,1}u_{s,1}-a_{r,1}u_{r,1})=$$a_{s,1}^*u_{s,1}^*$$-$$a_{s,1}u_{s,1}$.
To prove $\Delta_1\geq 0$ and $\Delta_1+Q_2^*\geq Q_2$, we consider the following two cases.

(1) Consider $(a_{s,1}^*,a_{r,1}^*)=(a_{s,1},a_{r,1})$. Since $u_{r,1}^*=\min\{R_r,Q_1^*\}$, $u_{r,1}\in\{0,1,\cdots,\min\{R_r,Q_1\}\}$ and $Q_1^*=Q_1$, we have $\Delta_1=a_{r,1}(u_{r,1}^*-u_{r,1})\geq 0$. Since $u_{s,1}^*=\min\{R_s, N_r-Q_1^*\}$, $u_{s,1}\in\{0,1,\cdots,\min\{R_s, N_r-Q_1\}\}$ and $Q_1^*=Q_1$, we have $\Delta_1+Q_2^*- Q_2=a_{s,1}(u_{s,1}^*-u_{s,1})\geq 0$.

(2) Consider $(a_{s,1}^*,a_{r,1}^*)\neq(a_{s,1},a_{r,1})$, which implies $(a_{s,1}^*,a_{r,1}^*)=(0,1)$ or $(1,0)$, $(a_{s,1},a_{r,1})=(0,0)$ and $\mathbf{G}_1=(1,1)$ by \eqref{eqn:relationship}. Thus, we have $\Delta_1=a_{r,1}^*u_{r,1}^*\geq 0$ and $\Delta_1+Q_2^*- Q_2=a_{s,1}^*u_{s,1}^*\geq 0$.

Consider $t>1$. Assume $\Delta_{t-1}\geq0$ and $\Delta_{t-1}+Q_t^*\geq Q_t$ hold for some $t>1$.
Note that,
\begin{align}
  &\Delta_{t}=\Delta_{t-1}+a_{r,t}^*u_{r,t}^*-a_{r,t}u_{r,t},\label{eqn:deltat}\\
  &\Delta_t+Q_{t+1}^*-Q_{t+1}=\left(\Delta_{t-1}+a_{r,t}^*u_{r,t}^*-a_{r,t}u_{r,t}\right)\nonumber\\
  &+\left(Q_t^*+a_{s,t}^*u_{s,t}^*-a_{r,t}^*u_{r,t}^*\right)-\left(Q_t+a_{s,t}u_{s,t}-a_{r,t}u_{r,t}\right)\nonumber\\
  &=\Delta_{t-1}+Q_t^*+a_{s,t}^*u_{s,t}^*-Q_t-a_{s,t}u_{s,t}.\label{eqn:deltat2}
\end{align}

To show that $\Delta_{t}\geq0$ and $\Delta_{t}+Q_{t+1}^*\geq Q_{t+1}$ also hold, we consider the following two cases.

(1) If $(a_{s,t}^*,a_{r,t}^*)=(a_{s,t},a_{r,t})$, we consider three cases. (i) If $(a_{s,t},a_{r,t})=(0,0)$, we have $\Delta_{t}=\Delta_{t-1}\geq 0$ and $\Delta_t+Q_{t+1}^*-Q_{t+1}=\Delta_{t-1}+Q_t^*-Q_t\geq 0$. (ii) If $(a_{s,t},a_{r,t})=(1,0)$, we have  $\Delta_{t}=\Delta_{t-1}\geq 0$. Since $u_{s,t}^*=\min\{R_s, N_r-Q_t^*\}$ and $u_{s,t}\in\{0,1,\cdots,\min\{R_s, N_r-Q_t\}\}$, by \eqref{eqn:deltat2}, we have $\Delta_t+Q_{t+1}^*-Q_{t+1}\geq \Delta_{t-1}+Q_t^*+\min\{R_s, N_r-Q_t^*\}-Q_t-\min\{R_s, N_r-Q_t\}=\min\{\Delta_{t-1}+Q_t^*+R_s, \Delta_{t-1}+N_r\}-\min\{Q_t+R_s,N_r\}\geq 0$, where the last inequality is due to the induction hypotheses. (iii) If $(a_{s,t},a_{r,t})=(0,1)$, we have $\Delta_t+Q_{t+1}^*-Q_{t+1}=\Delta_{t-1}+Q_t^*-Q_t\geq 0$. Since $u_{r,t}^*=\min\{R_r, Q_t^*\}$ and $u_{r,t}\in\{0,1,\cdots,\min\{R_r,Q_t\}\}$, by \eqref{eqn:deltat}, we have $\Delta_{t}\geq \Delta_{t-1}+\min\{R_r, Q_t^*\}-\min\{R_r,Q_t\}=\min\{\Delta_{t-1}+R_r, \Delta_{t-1}+Q_t^*\}-\min\{R_r,Q_t\}\geq 0$, where the last inequality is due to the induction hypotheses.

(2) If $(a_{s,t}^*,a_{r,t}^*)\neq(a_{s,t},a_{r,t})$, by \eqref{eqn:relationship}, we have $(a_{s,t}^*,a_{r,t}^*)=(0,1)$ or $(1,0)$, $(a_{s,t},a_{r,t})=(0,0)$ and $\mathbf{G}_t=(1,1)$. By \eqref{eqn:deltat} and \eqref{eqn:deltat2}, we have $\Delta_{t}=\Delta_{t-1}+a_{r,t}^*u_{r,t}^*\geq 0$ and $\Delta_t+Q_{t+1}^*-Q_{t+1}=\Delta_{t-1}+Q_t^*-Q_t+a_{s,t}^*u_{s,t}^*\geq 0$, where the two inequalities are due to the induction hypotheses.

Thus, we show that $\Delta_{t}\geq0$ and $\Delta_{t}+Q_{t+1}^*\geq Q_{t+1}$ also hold.
By induction, $\Delta_t\geq 0$ hold for all $t$ which leads to
\begin{equation}
  \frac{1}{T}\sum_{t=1}^Ta_{r,t}^*u_{r,t}^*\geq \frac{1}{T}\sum_{t=1}^Ta_{r,t}u_{r,t},~\forall T.
\end{equation}
By taking expectation over all sample paths, $\liminf$ and optimization over all link selection and transmission rate control policy space, we have $\max_{\Omega^*}\bar{R}^{\Omega^*} \geq \max_{\Omega}\bar{R}^{\Omega}$, where $\Omega^*=(\alpha^*,\mu^*)$ with $\alpha^*$ and $\mu^*$ satisfying the structures in \eqref{eqn:reducealpha} and \eqref{eqn:rate}, respectively.
In the following, we can restrict our attention to the optimal stationary policy $\Omega^*$.

Problem~\ref{problem:originalproblem} is an infinite horizon average cost MDP. We consider stationary unichain policies.
By Proposition 4.2.5 in \cite{bertsekas}, the Weak Accessibly condition holds for stationary unichain policies. Thus, by Proposition 4.2.3 and Proposition 4.2.1 in \cite{bertsekas}, the optimal average system throughput of the MDP in Problem~\ref{problem:originalproblem} is the same for all initial states and the solution $(\theta,\{\hat{V}(\mathbf{X})\})$ to the following Bellman equation exists.
\begin{align}
  \theta+\hat{V}(\mathbf{X})&=\max_{\Omega(\mathbf{X})}\bigg\{r(\mathbf{X},\Omega(\mathbf{X}))\nonumber\\
 &+\sum_{\mathbf{X}'}\Pr[\mathbf{X}'|\mathbf{X},\Omega(\mathbf{X})]\hat{V}(\mathbf{X}')\bigg\},~\forall \mathbf{X}\in\bm{\mathcal{X}},\label{eqn:oriBellman}
\end{align}
where $\theta=\bar{R}^*$ is the optimal value to Problem~\ref{problem:originalproblem} for all initial state $\mathbf{X}_1\in\bm{\mathcal{X}}$ and $\hat{V}(\cdot)$ is the value function.
Due to the i.i.d. property of $\mathbf{G}$, by taking expectation over $\mathbf{G}$ on both sides of \eqref{eqn:oriBellman}, we have
\begin{align}
  &\theta+V(Q)=\sum_{\mathbf{g}\in\bm{\mathcal{G}}}\Pr(\mathbf{G}=\mathbf{g})\max_{\Omega(\mathbf{X})}\Big\{r(\mathbf{X},\Omega(\mathbf{X}))\nonumber\\
  &~~~~~~~~~~~~+\sum_{Q'}\Pr[Q'|\mathbf{X},\Omega(\mathbf{X})]V(Q')\Big\},~\forall Q\in\mathcal{Q},\label{eqn:equivBellman1}
\end{align}
where $V(Q)=\mathbb{E}[\hat{V}(\mathbf{X})|Q]$.
Then, by the optimal link selection and transmission rate control structure in \eqref{eqn:reducealpha} and \eqref{eqn:rate}, the relationship between $Q'$ and $Q$ via \eqref{eqn:queue-dyn-t} and the per-stage reward $r(\mathbf{X},\Omega(\mathbf{X}))$ in \eqref{eqn:Rpi}, we have \eqref{eqn:reduceAction}. We complete the proof.

\section*{Appendix B: Proof of Lemma~\ref{lemma:propertiesofV}}\label{app:lemma1}
We prove the three properties in Lemma~\ref{lemma:propertiesofV} using RVIA and mathematical induction.

First, we introduce RVIA \cite{bertsekas}. For each $Q\in\mathcal{Q}$, let $V_n(Q)$ be the value function in the $n$th iteration, where $n=0,1,\cdots$. Define
\begin{subequations}
\begin{align}
&J_{n+1}(Q,a_{r,n})\triangleq\bar{p}_s\bar{p}_rV_n(Q)+p_s\bar{p}_rV_n(\min\{Q+R_s,N_r\})\nonumber\\
&+\bar{p}_sp_r\left(\min\{Q,R_r\}+V_n([Q-R_r]^+)\right)\nonumber\\
&+p_sp_r\big[a_{r,n} \min\{Q,R_r\}\nonumber\\
&+V_n\big(Q+\bar{a}_{r,n} \min\{N_r-Q,R_s\}-a_{r,n} \min\{Q,R_r\}\big)\big]\label{eqn:jn1_1}\\
&=\bar{p}_s\bar{p}_rV_n(Q)+p_s\bar{p}_rV_n(\min\{Q+R_s,N_r\})\nonumber\\
&+\bar{p}_sp_r\left(\min\{Q,R_r\}+V_n([Q-R_r]^+)\right)\nonumber\\
&+p_sp_r\big[\mathbf{1}(a_{r,n}=0)V_n(\min\{Q+R_s,N_r\})\nonumber\\
&+\mathbf{1}(a_{r,n}=1)\left(\min\{Q,R_r\}+V_n([Q-R_r]^+)\right)\big],\label{eqn:jn1_2}
\end{align}
\end{subequations}
where $\mathbf{1}(\cdot)$ denotes the indicator function.
Note that $J_{n+1}(Q,a_{r,n})$ is related to the R.H.S of the Bellman equation in \eqref{eqn:reduceAction}. We refer to $J_{n+1}(Q,a_{r,n})$ as the state-action reward function in the $n$th iteration\cite{ngo}.
For each $Q$, RVIA calculates $V_{n+1}(Q)$ as,
\begin{align}\label{eqn:RVIA}
  &V_{n+1}(Q)=\max_{a_{r,n}} J_{n+1}(Q,a_{r,n})-\max_{a_{r,n}} J_{n+1}(Q_0,a_{r,n}),~\forall n
\end{align}
 where $J_{n+1}(Q,a_{r,n})$ is given by \eqref{eqn:jn1_2} and $Q_0\in\mathcal{Q}$ is some fixed state. Under any initialization of $V_0(Q)$, the generated sequence $\{V_n(Q)\}$ converges to $V(Q)$ \cite{bertsekas}, i.e.,
 \begin{equation}
   \lim_{n\to\infty}V_n(Q)=V(Q),~\forall Q\in\mathcal{Q}.\label{eqn:converge}
 \end{equation}
 where $V(Q)$ satisfies the Bellman equation in \eqref{eqn:reduceAction}

 In the following proof, we set $V_0(Q)=0$ for all $Q\in\mathcal{Q}$.
Let $\alpha_{r,n}^*(Q)$ denote the control that attains the maximum of the first term in \eqref{eqn:RVIA} in the $n$th iteration for all $Q$, i.e.,
\begin{equation}
  \alpha_{r,n}^*(Q)=\arg\max_{a_{r,n}}J_{n+1}(Q,a_{r,n}),~~\forall Q\in\mathcal{Q}.\label{eqn:optimaln}
\end{equation}
We refer to $\alpha_{r,n}^*(Q)$ as the optimal policy for the $n$th iteration.
For ease of notation, in the following, we denote $\big(\alpha_{r,n}^*(Q+R_s+R_r+1),\alpha_{r,n}^*(Q+R_s+R_r),\alpha_{r,n}^*(Q+1), \alpha_{r,n}^*(Q)\big)$ as $\big(\alpha_{4,n}^*(Q),\alpha_{3,n}^*(Q),\alpha_{2,n}^*(Q),\alpha_{1,n}^*(Q)\big)$, where $Q\in\{0,1,...,N_r-(R_s+R_r+1)\}$.

Next, we prove Lemma~\ref{lemma:propertiesofV} through mathematical induction using RVIA.

(1) We prove Property 1 by showing that for all $n=0,1,\cdots$, $V_n(Q)$ satisfies
\begin{equation}
  V_n(Q+1)\geq V_n(Q),~Q\in\{0,1,\cdots,N_r-1\}.\label{eqn:p1}
\end{equation}
We initialize $V_0(Q)=0$, for all $Q\in\mathcal{Q}$. Thus, we have $V_0(Q+1)-V_0(Q)=0$, i.e., \eqref{eqn:p1} holds for $n=0$.
Assume that \eqref{eqn:p1} holds for some $n>0$. We will prove that \eqref{eqn:p1} also holds for $n+1$. By \eqref{eqn:RVIA}, we have
\begin{align}
&V_{n+1}(Q+1)=J_{n+1}\left(Q+1,\alpha_{2,n}^*(Q)\right)-\max_{a_{r,n}} J_{n+1}(Q_0,a_{r,n})\nonumber\\
&\overset{(a)}{\geq}J_{n+1}\left(Q+1,\alpha_{1,n}^*(Q)\right)-\max_{a_{r,n}} J_{n+1}(Q_0,a_{r,n})\nonumber\\
&\overset{(b)}{=}\bar{p}_s\bar{p}_rV_n(Q+1)+p_s\bar{p}_rV_n(\min\{Q+1+R_s,N_r\})\nonumber\\
&+\bar{p}_sp_r\left(\min\{Q+1,R_r\}+V_n([Q+1-R_r]^+)\right)\nonumber\\
&+p_sp_r\big[\mathbf{1}\left(\alpha_{1,n}^*(Q)=0\right)V_n(\min\{Q+1+R_s,N_r\})\nonumber\\
&+\mathbf{1}\left(\alpha_{1,n}^*(Q)=1\right)\left(\min\{Q+1,R_r\}+V_n([Q+1-R_r]^+)\right)\big]\nonumber\\
&-\max_{a_{r,n}} J_{n+1}(Q_0,a_{r,n}),\label{eqn:vn1q1}
\end{align}
where $(a)$ follows from the optimality of $\alpha_{2,n}^*(Q)$ for $Q+1$ in the $n$th iteration and  $(b)$ directly follows from \eqref{eqn:jn1_2}. By \eqref{eqn:jn1_2} and \eqref{eqn:RVIA}, we also have
\begin{align}
&V_{n+1}(Q)=J_{n+1}\left(Q,\alpha_{1,n}^*(Q)\right)-\max_{a_{r,n}} J_{n+1}(Q_0,a_{r,n})\nonumber\\
&=\bar{p}_s\bar{p}_rV_n(Q)+p_s\bar{p}_rV_n(\min\{Q+R_s,N_r\})\nonumber\\
&+\bar{p}_sp_r\left(\min\{Q,R_r\}+V_n([Q-R_r]^+)\right)\nonumber\\
&+p_sp_r\big[\mathbf{1}\left(\alpha_{1,n}^*(Q)=0\right)V_n(\min\{Q+R_s,N_r\})\nonumber\\
&+\mathbf{1}\left(\alpha_{1,n}^*(Q)=1\right)\left(\min\{Q,R_r\}+V_n([Q-R_r]^+)\right)\big]\nonumber\\
&-\max_{a_{r,n}} J_{n+1}(Q_0,a_{r,n}).\label{eqn:vn1q}
\end{align}
Next, we compare \eqref{eqn:vn1q1} and \eqref{eqn:vn1q} term by term.
By the facts that $\min\{Q+1+R_s,N_r\}\geq\min\{Q+R_s,N_r\}$, $[Q+1-R_r]^+\geq [Q-R_r]^+$ and $\min\{Q+1,R_r\}\geq\min\{Q,R_r\}$, and the induction hypothesis, we have $V_{n+1}(Q+1)\geq V_n(Q)$, i.e., \eqref{eqn:p1} holds for $n+1$. Therefore, by induction, \eqref{eqn:p1} holds for any $n$.  By taking limits on both sides of \eqref{eqn:p1} and by \eqref{eqn:converge}, we complete the proof of Property 1.

(2) We prove Property 2 by showing that for all $n=0,1,\cdots$, $V_n(Q)$ satisfies
\begin{equation}
  V_n(Q+1)- V_n(Q)\leq 1,~Q\in\{0,1,\cdots,N_r-1\}.\label{eqn:p2}
\end{equation}
We initialize $V_0(Q)=0$, for all $Q\in\mathcal{Q}$. Thus, we have $V_0(Q+1)-V_0(Q)=0$, i.e., \eqref{eqn:p2} holds for $n=0$.
Assume that \eqref{eqn:p2} holds for some $n>0$. We will prove that \eqref{eqn:p2} also holds for $n+1$. By \eqref{eqn:RVIA} and \eqref{eqn:jn1_2}, we have,
\begin{align}
  &V_{n+1}(Q+1)-V_{n+1}(Q)\nonumber\\
  =&J_{n+1}\left(Q+1,\alpha_{2,n}^*(Q)\right)-J_{n+1}\left(Q,\alpha_{1,n}^*(Q)\right)\nonumber\\
  =&\left[J_{n+1}\left(Q+1,\alpha_{2,n}^*(Q)\right)-J_{n+1}\left(Q,\alpha_{2,n}^*(Q)\right)\right]\nonumber\\
  &~~+\left[J_{n+1}\left(Q,\alpha_{2,n}^*(Q)\right)-J_{n+1}\left(Q,\alpha_{1,n}^*(Q)\right)\right]\nonumber\\
  \overset{(c)}{\leq}&J_{n+1}\left(Q+1,\alpha_{2,n}^*(Q)\right)-J_{n+1}\left(Q,\alpha_{2,n}^*(Q)\right)\nonumber\\
  =&\bar{p}_s\bar{p}_rA_1+p_s\bar{p}_rB_1+\bar{p}_sp_rC_1+p_sp_rD_1,\label{eqn:property2deltaJ}
\end{align}
where (c) is due to $J_{n+1}(Q,\alpha_{2,n}^*(Q))\leq J_{n+1}(Q,\alpha_{1,n}^*(Q))$. This is because $\alpha_{1,n}^*(Q)$ is the optimal policy for $Q$ in the $n$th iteration.
$A_1,B_1,C_1$ and $D_1$ in  \eqref{eqn:property2deltaJ} are given as follows.
\begin{subequations}
\begin{align}
  &A_1= V_n(Q+1)-V_n(Q),\label{eqn:a1}\\
  &B_1= V_n(\min\{Q+1+R_s,N_r\})-V_n(\min\{Q+R_s,N_r\}),\label{eqn:b1}\\
  &C_1= \min\{Q+1,R_r\}+V_n([Q+1-R_r]^+)\nonumber\\&~~~~~-\min\{Q,R_r\}-V_n([Q-R_r]^+),\label{eqn:c1}\\
  &D_1= \mathbf{1}\left(\alpha_{2,n}^*(Q)=0\right)B_1+\mathbf{1}\left(\alpha_{2,n}^*(Q)=1\right)C_1.\label{eqn:d1}
\end{align}
\end{subequations}
Note that $p_s+\bar{p}_s=1$ and $p_r+\bar{p}_r=1$. Thus, to show $V_{n+1}(Q+1)-V_{n+1}(Q)\leq 1$ using \eqref{eqn:property2deltaJ}, it suffices to show that $A_1\leq 1$, $B_1\leq 1$, $C_1\leq 1$ and $D_1\leq 1$.
Due to the induction hypothesis, $A_1\leq 1$ and $B_1\leq 1$ hold.
To prove $C_1\leq 1$, we consider the following two cases.
(i) When $Q\geq R_r$, we have $C_1=V_n(Q+1-R_r)-V_n(Q-R_r)\leq 1$ due to the induction hypothesis.
(ii) When $Q\leq R_r-1$, we have $C_1=1$.
Thus $C_1\leq 1$ holds.
To prove $D_1\leq 1$, we consider the following two cases.
(i) If $\alpha_{2,n}^*(Q)=0$, we have $D_1=B_1\leq 1$.
(ii) If $\alpha_{2,n}^*(Q)=1$, we have $D_1=C_1\leq 1$.  Thus, we can show that \eqref{eqn:p2} holds for $n+1$. Therefore, by induction \eqref{eqn:p2} holds for any $n$.  By taking limits on both sides of \eqref{eqn:p2} and by \eqref{eqn:converge}, we complete the proof of Property 2.

(3) We prove Property 3 by showing that for all $n=0,1,\cdots$, $V_n(Q)$ satisfies
\begin{align}
  &V_n(Q+R_s+R_r+1)-V_n(Q+R_s+R_r)\leq V_n(Q+1)\nonumber\\
  &-V_n(Q),~Q\in\{0,1,...,N_r-(R_s+R_r+1)\}.\label{eqn:p3}
\end{align}
We initialize $V_0(Q)=0$, for all $Q\in\mathcal{Q}$. Thus, we have $V_0(Q+R_s+R_r+1)-V_0(Q+R_s+R_r)=V_0(Q+1)-V_0(Q)=0$, i.e., \eqref{eqn:p3} holds for $n=0$.
Assume that \eqref{eqn:p3} holds for some $n>0$. We will prove that \eqref{eqn:p3} also holds for $n+1$. By \eqref{eqn:RVIA}, we have,
\begin{align}
  &V_{n+1}(Q+R_s+R_r+1)-V_{n+1}(Q+R_s+R_r)\nonumber\\
  =&J_{n+1}\left(Q+R_s+R_r+1,\alpha_{4,n}^*(Q)\right)\nonumber\\
  &-J_{n+1}\left(Q+R_s+R_r,\alpha_{3,n}^*(Q)\right)\nonumber\\
  =&\big[J_{n+1}\left(Q+R_s+R_r+1,\alpha_{4,n}^*(Q)\right)\nonumber\\
  &-J_{n+1}\left(Q+R_s+R_r,\alpha_{4,n}^*(Q)\right)\big]\nonumber\\
  &+\big[J_{n+1}\left(Q+R_s+R_r,\alpha_{4,n}^*(Q)\right)\nonumber\\
  &-J_{n+1}\left(Q+R_s+R_r,\alpha_{3,n}^*(Q)\right)\big]\nonumber\\
  \overset{(d)}{\leq}&J_{n+1}\left(Q+R_s+R_r+1,\alpha_{4,n}^*(Q)\right)\nonumber\\
  &-J_{n+1}\left(Q+R_s+R_r,\alpha_{4,n}^*(Q)\right),\label{eqn:rhsp31}
\end{align}
and
\begin{align}
  &V_{n+1}(Q+1)-V_{n+1}(Q)\nonumber\\
  =&J_{n+1}\left(Q+1,\alpha_{2,n}^*(Q)\right)-J_{n+1}\left(Q,\alpha_{1,n}^*(Q)\right)\nonumber\\
  =&\left[J_{n+1}\left(Q+1,\alpha_{2,n}^*(Q)\right)-J_{n+1}\left(Q+1,\alpha_{1,n}^*(Q)\right)\right]\nonumber\\
  &+\left[J_{n+1}\left(Q+1,\alpha_{1,n}^*(Q)\right)-J_{n+1}\left(Q,\alpha_{1,n}^*(Q)\right)\right]\nonumber\\
  \overset{(e)}{\geq}&J_{n+1}\left(Q+1,\alpha_{1,n}^*(Q)\right)-J_{n+1}\left(Q,\alpha_{1,n}^*(Q)\right),\label{eqn:rhsp32}
\end{align}
where $(d)$ and $(e)$ are due to $J_{n+1}(Q+R_s+R_r,\alpha_{4,n}^*(Q))\leq J_{n+1}(Q+R_s+R_r,\alpha_{3,n}^*(Q))$ and $J_{n+1}(Q+1,\alpha_{2,n}^*(Q))\geq J_{n+1}(Q+1,\alpha_{1,n}^*(Q))$, respectively. This is because $\alpha_{3,n}^*(Q)$ and $\alpha_{2,n}^*(Q)$ are the optimal policies for  $Q+R_s+R_r$ and $Q+1$ in the $n$th iteration, respectively.

To show that $V_{n+1}(Q+R_s+R_r+1)-V_{n+1}(Q+R_s+R_r)\leq V_{n+1}(Q+1)-V_{n+1}(Q)$, it suffices to show that $J_{n+1}(Q+R_s+R_r+1,\alpha_{4,n}^*(Q))-J_{n+1}(Q+R_s+R_r,\alpha_{4,n}^*(Q))\leq J_{n+1}(Q+1,\alpha_{1,n}^*(Q))-J_{n+1}(Q,\alpha_{1,n}^*(Q))$, i.e., the R.H.S. of \eqref{eqn:rhsp31} is no greater than the R.H.S. of \eqref{eqn:rhsp32}.
By \eqref{eqn:jn1_2}, we have
\begin{align}
  &J_{n+1}\left(Q+R_s+R_r+1,\alpha_{4,n}^*(Q)\right)\nonumber\\&-J_{n+1}\left(Q+R_s+R_r,\alpha_{4,n}^*(Q)\right)\nonumber\\
  =&\bar{p}_s\bar{p}_rA_2+p_s\bar{p}_rB_2+\bar{p}_sp_rC_2+p_sp_rD_2,\label{eqn:property3deltaJ11}
\end{align}
where
\begin{subequations}
\begin{align}
  A_2=&V_n(Q+R_s+R_r+1)-V_n(Q+R_s+R_r),\label{eqn:a2}\\
  B_2=&V_n(\min\{Q+2R_s+R_r+1,N_r\})\nonumber\\
&-V_n(\min\{Q+2R_s+R_r,N_r\}),\label{eqn:b2}\\
  C_2=&V_n(Q+R_s+1)-V_n(Q+R_s),\label{eqn:c2}\\
  D_2=&\mathbf{1}\left(\alpha_{4,n}^*(Q)=0\right)B_2+\mathbf{1}\left(\alpha_{4,n}^*(Q)=1\right)C_2,\label{eqn:d2}
\end{align}
\end{subequations}
and
\begin{align}
&J_{n+1}\left(Q+1,\alpha_{1,n}^*(Q)\right)-J_{n+1}\left(Q,\alpha_{1,n}^*(Q)\right)\nonumber\\
=&\bar{p}_s\bar{p}_rA_1+p_s\bar{p}_rB_1+\bar{p}_sp_rC_1+p_sp_rD_1',\label{eqn:property3deltaJ12}
\end{align}
where
$A_1$, $B_1$ and $C_1$ are given by \eqref{eqn:a1}, \eqref{eqn:b1} and \eqref{eqn:c1}, respectively,
and
\begin{equation*}
D_1'=\mathbf{1}\left(\alpha_{1,n}^*(Q)=0\right)B_1+\mathbf{1}\left(\alpha_{1,n}^*(Q)=1\right)C_1.\label{eqn:d11}
\end{equation*}
Note that, when $Q\in\{0,1,...,N_r-(Q+R_s+R_r+1)\}$, \eqref{eqn:b1} can be rewritten as $B_1=V_n(Q+1+R_s)-V_n(Q+R_s)$.

To show that \eqref{eqn:p3} holds for $n+1$ using \eqref{eqn:property3deltaJ11} and \eqref{eqn:property3deltaJ12}, it suffices to show that $A_2\leq A_1$, $B_2\leq B_1$, $C_2\leq C_1$ and $D_2\leq D_1'$. Due to the induction hypothesis, $A_2\leq A_1$ holds.
To prove $B_2\leq B_1$, we consider the following two cases. (i) When $Q+2R_s+R_r\geq N_r$, we have $B_2-B_1=V_n(Q+R_s)-V_n(Q+R_s+1)\leq 0$  as \eqref{eqn:p1} holds for any $n$. (ii) When $Q+2R_s+R_r+1\leq N_r$, we have $B_2-B_1=V_n(Q+2R_s+R_r+1)-V_n(Q+2R_s+R_r)-\left(V_n(Q+R_s+1)-V_n(Q+R_s)\right)\leq 0$ due to the induction hypothesis.
Thus, $B_2\leq B_1$ holds.
To prove $C_2\leq C_1$, we consider two cases. (i) When $Q\leq R_r-1$, we have $C_2-C_1=V_n(Q+R_s+1)-V_n(Q+R_s)-1\leq 0$ as \eqref{eqn:p2} holds for any $n$. (ii) When $Q\geq R_r$, we have $C_2-C_1=V_n(Q+R_s+1)-V_n(Q+R_s)-\left(V_n(Q+1-R_r)-V_n(Q-R_r)\right)\leq 0$ due to the induction hypothesis.
Thus, $C_2\leq C_1$ holds.
To prove $D_2\leq D_1'$, we consider the following four cases. (i) If $\left(\alpha_{4,n}^*(Q),\alpha_{1,n}^*(Q)\right)=(0,0)$, we have $D_2-D_1'=B_2-B_1\leq 0$. (ii) If $\left(\alpha_{4,n}^*(Q),\alpha_{1,n}^*(Q)\right)=(1,0)$, we have $D_2-D_1'=C_2-B_1=0$. (iii) If $\left(\alpha_{4,n}^*(Q),\alpha_{1,n}^*(Q)\right)=(1,1)$, we have $D_2-D_1'=C_2-C_1\leq 0$. (iv) If $\left(\alpha_{4,n}^*(Q),\alpha_{1,n}^*(Q)\right)=(0,1)$, we have $D_2-D_1'=B_2-C_1\leq B_1-C_1=C_2-C_1\leq 0$. Thus, $D_2\leq D_1'$ holds.

We have shown that $A_2\leq A_1$, $B_2\leq B_1$, $C_2\leq C_1$ and $D_2\leq D_1'$. Thus, \eqref{eqn:p3} holds for $n+1$. Therefore, by induction \eqref{eqn:p3} holds for any $n$.
By taking limits on both sides of \eqref{eqn:p3} and by \eqref{eqn:converge}, we complete the proof of Property 3.

\section*{Appendix C: Proof of Theorem~\ref{theorem:theorem2}}\label{app:theorem2}
First, we show that $J(Q,a_r)$ in \eqref{eqn:state_action_func} is supermodular in $(Q,a_r)$.
By the definition of supermodularity, it is equivalent to prove $\Delta J(Q+1)-\Delta J(Q)\geq 0$, where $\Delta J(Q)\triangleq \left(J(Q,1)-J(Q,0)\right)/p_sp_r$.
By \eqref{eqn:state_action_func}, we have
\begin{eqnarray}
  \Delta J(Q+1)-\Delta J(Q)=\min\{Q+1, R_r\}-\min\{Q, R_r\}&&\nonumber\\
  +V([Q+1-R_r]^+)-V([Q-R_r]^+)~~~~~~~~~~~~~~~~~~~~&&\nonumber\\
 -V(\min\{Q+1+R_s, N_r\})+V(\min\{Q+R_s, N_r\}). ~&&
  \label{eqn:diff_deltaJ}
\end{eqnarray}

To prove $\Delta J(Q+1)-\Delta J(Q)\geq 0$ using \eqref{eqn:diff_deltaJ}, we consider the following two cases.

(1) If $N_r\leq R_s+R_r$, we consider three cases.
  (i) When $Q\geq R_r$, then $\Delta J(Q+1)-\Delta J(Q)=V(Q-R_r+1)-V(Q-R_r)\geq 0$.
  (ii) When $N_r-R_s\leq Q \leq R_r-1$, then $\Delta J(Q+1)-\Delta J(Q)=1$.
  (iii) When $Q\leq N_r-R_s-1$, then $\Delta J(Q+1)-\Delta J(Q)=1-(V(Q+R_s+1)-V(Q+R_s))\geq 0$.

(2) If $N_r\geq R_s+R_r+1$, we consider the similar three cases.
  (i) When $Q\geq N_r-R_s$, then $\Delta J(Q+1)-\Delta J(Q)=V(Q-R_r+1)-V(Q-R_r)\geq 0$.
  (ii) When $R_r\leq Q \leq N_r-R_s-1$, then $\Delta J(Q+1)-\Delta J(Q)=\big(V(Q+1-R_r)-V(Q_r-R_r)\big)-\big(V(Q+R_s+1)-V(Q+R_s)\big)\geq 0$.
  (iii) When $Q\leq R_r-1$, then $\Delta J(Q+1)-\Delta J(Q)=1-(V(Q+R_s+1)-V(Q+R_s))\geq 0$.

Therefore, $\Delta J(Q+1)-\Delta J(Q)\geq 0$ holds which implies that  $J(Q,a_r)$ in \eqref{eqn:state_action_func} is supermodular in $(Q, a_r)$. According to \cite[Lemma 4.7.1]{puterman}, the optimal policy $\alpha_r^*(Q)$ given by \eqref{eqn:omegaalpha} is monotonically non-decreasing in $Q$. Thus $\alpha_r^*(Q)$ has the threshold-based structure in \eqref{eqn:threshold} which completes the proof.

\section*{Appendix D: Proof of Lemma~\ref{lemma:recurrentclass}}
 Consider $l=0$, i.e., $N_r=nR$. Assume $Q_1=0$ is the initial state. According the B\'{e}zout's identity and the queue dynamics in \eqref{eqn:queue-dyn-t}, for all $t=1,2,\cdots$ and $k=1,2,\cdots,n$, there exist integers $x_{t,k}$ and $y_{t,k}$ such that $Q_t=Q_1+x_{t,k}a+y_{t,k}b=kR$.
Denote $\mathcal{S}=\{R,2R,\cdots,nR\}$. In other words, each state $s\in\mathcal{S}$ is accessible from state $0$.
On the other hand, for any initial state $Q_1\in\mathcal{Q}$, we have $\Pr[Q_t=0]>0$ for some $t$.
The reason is that CSI may stay $(0,1)$ for enough consecutive time slots which implies that the relay buffer will be empty under any policy in Definition~\ref{definition:definition1}.
Thus, state $0$ is  accessible from all states in $\mathcal{Q}$. Note that $\mathcal{S}\subseteq\mathcal{Q}$.
Therefore, by \cite[Definition 4.2.5]{gallager}, state $0$ is a recurrent state. Denote $\mathcal{C}=\{0\}\cup \mathcal{S}$. Then, by \cite[Theorem 4.2.1]{gallager}, $\mathcal{C}$ is a recurrent class.
Note that, under the optimal transmission rate policy in \eqref{eqn:rate}, each state $\not\in\mathcal{C}$ is not accessible from state $0$.
Thus, by \cite[Definition 4.2.5]{gallager}, the states $\not\in\mathcal{C}$ are all transient states.
Therefore, if $l=0$, the recurrent class of $\{Q_t\}$ is $\mathcal{C}=\{0,R,2R,\cdots,nR\}$.

Consider $l\neq 0$, i.e., $N_r=nR+l$. $\mathcal{C}_1\triangleq\{0,R,2R,\cdots,nR\}$ is still a recurrent class. Assume $Q_1=N_r$ is the initial state. Similarly, for all $t=1,2,\cdots$ and $k=0,1,\cdots,n-1$, there exist integers $x_{t,k}$ and $y_{t,k}$ such that $Q_t=Q_1+x_{t,k}a+y_{t,k}b=kR+l$.
Denote $\mathcal{S}'=\{l,l+R,\cdots,l+(n-1)R\}$.
Then, each state $s\in\mathcal{S}'$ is accessible from state $N_r$.
On the other hand, for any initial state $Q_1\in\mathcal{Q}$, we have $\Pr[Q_t=N_r]>0$ for some $t$.
The reason is that CSI may stay $(1,0)$ for enough consecutive time slots  which implies that the buffer will be full under any policy in Definition~\ref{definition:definition1}.
Thus, state $N_r$ is accessible from all states in $\mathcal{Q}$.
Note that $\mathcal{S}'\subseteq\mathcal{Q}$.
Therefore, state $N_r$ is a recurrent state and $\mathcal{C}_2\triangleq \{N_r\}\cup \mathcal{S}'$ is a recurrent class\cite{gallager}.
Note that, state $0$ and $N_r$ are accessible from each other.
Thus, $\mathcal{C}\triangleq\mathcal{C}_1 \cup \mathcal{C}_2$ is a recurrent class.
Similarly, by \eqref{eqn:rate}, the states $\not\in\mathcal{C}$ are all transient states.
Therefore, if $l\neq 0$, the recurrent class of $\{Q_t\}$ is $\mathcal{C}=\{0,R,2R,\cdots,nR,l,l+R,l+R,\cdots,N_r\}$.
We complete the proof.

\section*{APPENDIX E: Proof of Lemma~\ref{lemma:relationship12}}\label{app:lemma3}
By Lemma~\ref{lemma:recurrentclass}, for any given $R_s, R_r, N_r$, the recurrent class $\mathcal{C}$ is fixed  and does not change with the threshold $Q_{th}\in\mathcal{Q}$, and the ergodic system throughput in \eqref{eqn:problem2} only depends on the steady-state probability vector $\bm{\pi}(Q_{th})$ and the average departure rate vector $\mathbf{r}(Q_{th})$ of $\mathcal{C}$.
Note that, by \eqref{eqn:comp-pi} and \eqref{eqn:comp-ri}, $\bm{\pi}(Q_{th})$ and $\mathbf{r}(Q_{th})$ only depend on the link selection control for the recurrent states in $\mathcal{C}$.
Since $q_{th}^*$ and $q_{th,next}^*$  are two adjacent recurrent states, by \eqref{eqn:thresholdlemma2}, any threshold $Q_{th}^*\in\{Q|q_{th}^*\leq Q<q_{th,next}^*,Q\in\mathcal{Q}\}$ leads to the same link selection control for the recurrent states, and hence achieves the same ergodic throughput.  Under the stationary unichain policies in \eqref{eqn:rate} and \eqref{eqn:thresholdlemma2}, the induced markov chain $\{Q_t\}$ is an ergodic unichain. By ergodic theory, the time-average system throughput in \eqref{eqn:problem1} is equivalent to the ergodic system throughput in \eqref{eqn:problem2}, i.e., $\bar{R}^*=\bar{r}^*$. Thus, the optimal control to Problem~\ref{problem:originalproblem} can be obtained by solving Problem~\ref{problem:equivalentproblem}. We complete the proof of Lemma~\ref{lemma:relationship12}.

\section*{APPENDIX F: Proof of Lemma~\ref{lemma:lemmaofupdate}}\label{app:lemma4}
  For two adjacent thresholds $c_k$ and $c_{k+1}$, the corresponding transition probability matrices  $\mathbf{P}(k)$ and $\mathbf{P}(k+1)$ only differ in the $(k+1)$-th row. Thus, $\mathbf{A}(k)$ and $\mathbf{A}(k+1)$ only differ in the $(k+1)$-th column. Then, by partitioning $\mathbf{A}(k)$ and $\mathbf{A}(k+1)$ into the form \eqref{eqn:permu} using the permutation matrices $\mathbf{K}(k)$ and $\mathbf{K}(k+1)$, respectively, we obtain the corresponding submatrices $\hat{\mathbf{A}}(k)$ and $\hat{\mathbf{A}}(k+1)$. By exchanging the $(k+1)$-th and $(k+2)$-th columns of $\hat{\mathbf{A}}(k)$, we obtain $\tilde{\mathbf{A}}(k)$, i.e.,
  \begin{equation}
\tilde{\mathbf{A}}(k)=\hat{\mathbf{A}}(k)\hat{\mathbf{K}}(k),\label{eqn:defi-hatakk}
\end{equation}
where $\hat{\mathbf{K}}(k)$ is the corresponding permutation matrix defined in Lemma~\ref{lemma:lemmaofupdate}.
Thus, $\tilde{\mathbf{A}}(k)$ and $\hat{\mathbf{A}}(k+1)$ differ only in the $(k+1)$-th column, and $\hat{\mathbf{A}}(k+1)$ can be regarded as a rank-one update of $\tilde{\mathbf{A}}(k)$. Let
\begin{equation}
    \mathbf{u}(k)=\mathbf{a}_{k+1}(k+1)-\hat{\mathbf{a}}_{k+1}(k),\label{eqn:defiukproof}
\end{equation}
where $\mathbf{a}_{k+1}(k+1)$ and $\hat{\mathbf{a}}_{k+1}(k)$ are the $(k+1)$-column of $\hat{\mathbf{A}}(k+1)$ and $\tilde{\mathbf{A}}(k)$, respectively.
Then, we have $\hat{\mathbf{A}}(k+1)=\tilde{\mathbf{A}}(k)+\mathbf{u}(k)\mathbf{v}(k)^T,$
where $\mathbf{v}(k)$ is defined in \eqref{eqn:defivk}.
By the Sherman-Morrison formula\cite{updating}, we have
\begin{align}
  \hat{\mathbf{A}}(k+1)^{-1}&=\tilde{\mathbf{A}}(k)^{-1}
  -\frac{\tilde{\mathbf{A}}(k)^{-1}\mathbf{u}(k)\mathbf{v}(k)^T\tilde{\mathbf{A}}(k)^{-1}}{1+\mathbf{v}(k)^T\tilde{\mathbf{A}}(k)^{-1}\mathbf{u}(k)}.\label{eqn:comp-nextk2}
\end{align}
By \eqref{eqn:defi-hatakk}, we have $\tilde{\mathbf{A}}(k)^{-1}=\hat{\mathbf{K}}(k)\hat{\mathbf{A}}(k)^{-1}$ and $\hat{\mathbf{a}}_{k+1}(k)=\mathbf{a}_{k+2}(k)$. Thus, \eqref{eqn:defiukproof} is equivalent to \eqref{eqn:defiuk} and \eqref{eqn:comp-nextk2} is equivalent to \eqref{eqn:comp-nextk}. We complete the proof.

\section*{APPENDIX G: Proof of Lemma~\ref{lemma:solution2Symm}}\label{app:lemma5}

First, we show that Problem~\ref{problem:equivalentproblem} can be equivalently transformed to Problem~\ref{problem:problemSymm}.
Given a CSI sample path $\{\mathbf{G}_t\}$, let $\{(a_{s,t},a_{r,t})\}$ and $\{(a_{s,t},a_{r,t})\}$ be the sequences of link selection and transmission rate actions under a policy $\Omega$ in Definition~\ref{definition:definition1}, respectively.
Let $\{Q_t\}$ be the associated QSI trajectory.
By \eqref{eqn:queue-dyn-t}, we have  $\frac{1}{T}\sum_{t=1}^T a_{s,t}u_{s,t}=\frac{1}{T}\sum_{t=1}^T a_{r,t}u_{r,t}+\frac{Q_T-Q_1}{T}$ for all $T$.
Since $Q_1, Q_T\leq N_r$, by taking expectation over all sample paths and $\liminf$, we have
$\liminf_{T\to\infty}\frac{1}{T}\sum_{t=1}^T \mathbb{E}\left[a_{s,t}u_{s,t}\right]=\liminf_{T\to\infty}\frac{1}{T}\sum_{t=1}^T \mathbb{E}\left[a_{r,t}u_{r,t}\right]=\bar{R}^\Omega$.
Hence, under the optimal transmission rate control in \eqref{eqn:rate} and a threshold-based link selection policy in \eqref{eqn:thresholdlemma2}, the average arrival rate equals to the average departure rate.
Therefore, without loss of optimality, we can also denote $r_i(q_{th})$ as the average of the average arrival and departure rates at state  $i\in\mathcal Q$ under the threshold $q_{th}$.
Then, in the symmetric case, we have
\begin{equation}
  r_i(m)=\left\{
           \begin{array}{ll}
                ~~~\frac{pR}{2}, & \hbox{$i=0,n$;} \\
                    \frac{(p+p\bar{p})R}{2}, & \hbox{$i=1,...,n-1$.}
           \end{array}
         \right.\label{eqn:reward}
\end{equation}
where state $i\in\{0,1,\cdots,n\}$ represents state $iR\in\mathcal{C}$ and $m=q_{th}/R$.
By \eqref{eqn:reward}, we obtain the ergodic system throughput $\bar{r}(m)=\frac{(p+p\bar{p})R}{2}-\frac{p\bar{p}R}{2}(\pi_0(m)+\pi_n(m))$. Maximizing $\bar{r}(m)$ is equivalent to minimizing $\pi_0(m)+\pi_n(m)$. Therefore, by \eqref{eqn:pi0pin}, we have the discrete optimization problem in Problem~\ref{problem:problemSymm}.

Next, we obtain the optimal solution to Problem~\ref{problem:problemSymm}.
By change of variables in Problem~\ref{problem:problemSymm}, i.e., letting $x=\bar{p}^m$, we have the following continuous optimization problem.
\begin{align}
\min_{x\in[\bar{p}^n, 1]} &~~~g(x)=\frac{\bar{p}^{n+1}-\bar{p}^n+\bar{p}^2x^2-\bar{p}x^2}{\bar{p}^2x^2+\bar{p}^{n+1}-2\bar{p}x}.\label{eqn:optimalx}
\end{align}
Letting the derivative of $g(x)$, i.e., $g'(x)=\frac{2\bar{p}p^2(x^2-\bar{p}^{n-1})}{(\bar{p}^2x^2+\bar{p}^{n+1}-2\bar{p}x)^2}$, equal to $0$, we have $x^*=\bar{p}^{\frac{n-1}{2}}$. Since $g'(x)\leq 0$ in $[\bar{p}^n,\bar{p}^{\frac{n-1}{2}}]$ and $g'(x)\geq 0$ in $[\bar{p}^{\frac{n-1}{2}},1]$, $x^*=\bar{p}^{\frac{n-1}{2}}$ is the optimal solution to \eqref{eqn:optimalx}.
Based on $x^*$, we now obtain the optimal solution $m^*$ to Problem~\ref{problem:problemSymm}.
If $n$ is odd, then $m^*=\frac{n-1}{2}$; if $n$ is even, $m^*=\frac{n}{2}-1$ or $\frac{n}{2}$. (Note that, $\frac{n}{2}-1$ and $\frac{n}{2}$ achieve the same optimal value of Problem~\ref{problem:problemSymm}.)
Then, by Lemma~\ref{lemma:relationship12}, we complete the proof.

\bibliographystyle{IEEEtran}
\bibliography{IEEEabrv,revised}

\end{document}